\newtheorem{theorem}{Theorem}
\newtheorem{axiom}[theorem]{Axiom}
\newtheorem{claim}[theorem]{Claim}
\newtheorem{conclusion}[theorem]{Conclusion}
\newtheorem{conjecture}[theorem]{Conjecture}
\newtheorem{corollary}[theorem]{Corollary}
\newtheorem{definition}[theorem]{Definition}
\newtheorem{example}[theorem]{Example}
\newtheorem{exercise}[theorem]{Exercise}
\newtheorem{lemma}[theorem]{Lemma}
\newtheorem{proposition}[theorem]{Proposition}
\newtheorem{remark}[theorem]{Remark}
\newtheorem{partial solution}[theorem]{Partial Solution}
\newenvironment{proof}[1][Proof]{\textbf{#1.} }{\ \rule{0.5em}{0.5em}}
\chardef\@x10\chardef\@xv60
\def\tcitime{
\def\@time{%
  \@minute\time\@hour\@minute\divide\@hour\@xv
  \ifnum\@hour<\@x 0\fi\the\@hour:%
  \multiply\@hour\@xv\advance\@minute-\@hour
  \ifnum\@minute<\@x 0\fi\the\@minute
  }}%
\def\QCTOpt[#1]#2{%
  \def\QCTOptB{#1}
  \def\QCTOptA{#2}
}
\def\QCTNOpt#1{%
  \def\QCTOptA{#1}
  \let\QCTOptB\empty
}
\def\Qct{%
  \@ifnextchar[{%
    \QCTOpt}{\QCTNOpt}
}
\def\QCBOpt[#1]#2{%
  \def\QCBOptB{#1}
  \def\QCBOptA{#2}
}
\def\QCBNOpt#1{%
  \def\QCBOptA{#1}
  \let\QCBOptB\empty
}
\def\Qcb{%
  \@ifnextchar[{%
    \QCBOpt}{\QCBNOpt}
}
\def\PrepCapArgs{%
  \ifx\QCBOptA\empty
    \ifx\QCTOptA\empty
      {}%
    \else
      \ifx\QCTOptB\empty
        {\QCTOptA}%
      \else
        [\QCTOptB]{\QCTOptA}%
      \fi
    \fi
  \else
    \ifx\QCBOptA\empty
      {}%
    \else
      \ifx\QCBOptB\empty
        {\QCBOptA}%
      \else
        [\QCBOptB]{\QCBOptA}%
      \fi
    \fi
  \fi
}
\def\GRAPHICSPS#1{%
 \ifcase\GRAPHICSTYPE
   \special{ps: #1}%
 \or
   \special{language "PS", include "#1"}%
 \fi
}%
\def\graffile#1#2#3#4{%
    \bgroup
    \leavevmode
    \@ifundefined{bbl@deactivate}{\def~{\string~}}{\activesoff}
    \raise -#4 \BOXTHEFRAME{%
        \hbox to #2{\raise #3\hbox to #2{\null #1\hfil}}}%
    \egroup
}%
\def\draftbox#1#2#3#4{%
 \leavevmode\raise -#4 \hbox{%
  \frame{\rlap{\protect\tiny #1}\hbox to #2%
   {\vrule height#3 width\z@ depth\z@\hfil}%
  }%
 }%
}%
\newif\ifwasdraft
\def\GRAPHIC#1#2#3#4#5{%
 \ifnum\draft=\@ne\draftbox{#2}{#3}{#4}{#5}%
  \else\graffile{#1}{#3}{#4}{#5}%
  \fi
 }%
\def\addtoLaTeXparams#1{%
    \edef\LaTeXparams{\LaTeXparams #1}}%
\newif\ifBoxFrame \BoxFramefalse
\newif\ifOverFrame \OverFramefalse
\newif\ifUnderFrame \UnderFramefalse
\def\BOXTHEFRAME#1{%
   \hbox{%
      \ifBoxFrame
         \frame{#1}%
      \else
         {#1}%
      \fi
   }%
}
\def\doFRAMEparams#1{\BoxFramefalse\OverFramefalse\UnderFramefalse\readFRAMEparams#1\end}%
\def\readFRAMEparams#1{%
 \ifx#1\end%
  \let\next=\relax
  \else
  \ifx#1i\dispkind=\z@\fi
  \ifx#1d\dispkind=\@ne\fi
  \ifx#1f\dispkind=\tw@\fi
  \ifx#1t\addtoLaTeXparams{t}\fi
  \ifx#1b\addtoLaTeXparams{b}\fi
  \ifx#1p\addtoLaTeXparams{p}\fi
  \ifx#1h\addtoLaTeXparams{h}\fi
  \ifx#1X\BoxFrametrue\fi
  \ifx#1O\OverFrametrue\fi
  \ifx#1U\UnderFrametrue\fi
  \ifx#1w
    \ifnum\draft=1\wasdrafttrue\else\wasdraftfalse\fi
    \draft=\@ne
  \fi
  \let\next=\readFRAMEparams
  \fi
 \next
 }%
\def\IFRAME#1#2#3#4#5#6{%
      \bgroup
      \let\QCTOptA\empty
      \let\QCTOptB\empty
      \let\QCBOptA\empty
      \let\QCBOptB\empty
      #6%
      \parindent=0pt%
      \leftskip=0pt
      \rightskip=0pt
      \setbox0 = \hbox{\QCBOptA}%
      \@tempdima = #1\relax
      \ifOverFrame
          \typeout{This is not implemented yet}%
          \show\HELP
      \else
         \ifdim\wd0>\@tempdima
            \advance\@tempdima by \@tempdima
            \ifdim\wd0 >\@tempdima
               \textwidth=\@tempdima
               \setbox1 =\vbox{%
                  \noindent\hbox to \@tempdima{\hfill\GRAPHIC{#5}{#4}{#1}{#2}{#3}\hfill}\\%
                  \noindent\hbox to \@tempdima{\parbox[b]{\@tempdima}{\QCBOptA}}%
               }%
               \wd1=\@tempdima
            \else
               \textwidth=\wd0
               \setbox1 =\vbox{%
                 \noindent\hbox to \wd0{\hfill\GRAPHIC{#5}{#4}{#1}{#2}{#3}\hfill}\\%
                 \noindent\hbox{\QCBOptA}%
               }%
               \wd1=\wd0
            \fi
         \else
            \ifdim\wd0>0pt
              \hsize=\@tempdima
              \setbox1 =\vbox{%
                \unskip\GRAPHIC{#5}{#4}{#1}{#2}{0pt}%
                \break
                \unskip\hbox to \@tempdima{\hfill \QCBOptA\hfill}%
              }%
              \wd1=\@tempdima
           \else
              \hsize=\@tempdima
              \setbox1 =\vbox{%
                \unskip\GRAPHIC{#5}{#4}{#1}{#2}{0pt}%
              }%
              \wd1=\@tempdima
           \fi
         \fi
         \@tempdimb=\ht1
         \advance\@tempdimb by \dp1
         \advance\@tempdimb by -#2%
         \advance\@tempdimb by #3%
         \leavevmode
         \raise -\@tempdimb \hbox{\box1}%
      \fi
      \egroup%
}%
\def\DFRAME#1#2#3#4#5{%
 \begin{center}
     \let\QCTOptA\empty
     \let\QCTOptB\empty
     \let\QCBOptA\empty
     \let\QCBOptB\empty
     \ifOverFrame 
        #5\QCTOptA\par
     \fi
     \GRAPHIC{#4}{#3}{#1}{#2}{\z@}
     \ifUnderFrame 
        \nobreak\par\nobreak#5\QCBOptA
     \fi
 \end{center}%
 }%
\def\FFRAME#1#2#3#4#5#6#7{%
 \begin{figure}[#1]%
  \let\QCTOptA\empty
  \let\QCTOptB\empty
  \let\QCBOptA\empty
  \let\QCBOptB\empty
  \ifOverFrame
    #4
    \ifx\QCTOptA\empty
    \else
      \ifx\QCTOptB\empty
        \caption{\QCTOptA}%
      \else
        \caption[\QCTOptB]{\QCTOptA}%
      \fi
    \fi
    \ifUnderFrame\else
      \label{#5}%
    \fi
  \else
    \UnderFrametrue%
  \fi
  \begin{center}\GRAPHIC{#7}{#6}{#2}{#3}{\z@}\end{center}%
  \ifUnderFrame
    #4
    \ifx\QCBOptA\empty
      \caption{}%
    \else
      \ifx\QCBOptB\empty
        \caption{\QCBOptA}%
      \else
        \caption[\QCBOptB]{\QCBOptA}%
      \fi
    \fi
    \label{#5}%
  \fi
  \end{figure}%
 }%
\def\makeactives{
  \catcode`\"=\active
  \catcode`\;=\active
  \catcode`\:=\active
  \catcode`\'=\active
  \catcode`\~=\active
}
   \gdef\activesoff{%
      \def"{\string"}
      \def;{\string;}
      \def:{\string:}
      \def'{\string'}
      \def~{\string~}
    }
\def\FRAME#1#2#3#4#5#6#7#8{%
 \bgroup
 \ifnum\draft=\@ne
   \wasdrafttrue
 \else
   \wasdraftfalse%
 \fi
 \def\LaTeXparams{}%
 \dispkind=\z@
 \def\LaTeXparams{}%
 \doFRAMEparams{#1}%
 \ifnum\dispkind=\z@\IFRAME{#2}{#3}{#4}{#7}{#8}{#5}\else
  \ifnum\dispkind=\@ne\DFRAME{#2}{#3}{#7}{#8}{#5}\else
   \ifnum\dispkind=\tw@
    \edef\@tempa{\noexpand\FFRAME{\LaTeXparams}}%
    \@tempa{#2}{#3}{#5}{#6}{#7}{#8}%
    \fi
   \fi
  \fi
  \ifwasdraft\draft=1\else\draft=0\fi{}%
  \egroup
 }%
\def\TEXUX#1{"texux"}
\long\def\QQQ#1#2{%
     \long\expandafter\def\csname#1\endcsname{#2}}%
\long\def\QQA#1#2{}%
\def\QTR#1#2{{\csname#1\endcsname #2}}
\def\EXPAND#1[#2]#3{}%
\def\NOEXPAND#1[#2]#3{}%
\def\LaTeXparent#1{}%
\def\ChildStyles#1{}%
\def\ChildDefaults#1{}%
\def\QTagDef#1#2#3{}%
  \providecommand{\UNICODE}[2][]{}
\def\QQfnmark#1{\footnotemark}
 \def\abstract{%
  \if@twocolumn
   \section*{Abstract (Not appropriate in this style!)}%
   \else \small 
   \begin{center}{\bf Abstract\vspace{-.5em}\vspace{\z@}}\end{center}%
   \quotation 
   \fi
  }%
   \def\registered{\relax\ifmmode{}\r@gistered
                    \else$\m@th\r@gistered$\fi}%
 \def\r@gistered{^{\ooalign
  {\hfil\raise.07ex\hbox{$\scriptstyle\rm\text{R}$}\hfil\crcr
  \mathhexbox20D}}}}{}%
\newdimen\theight
\def\Column{%
 \vadjust{\setbox\z@=\hbox{\scriptsize\quad\quad tcol}%
  \theight=\ht\z@\advance\theight by \dp\z@\advance\theight by \lineskip
  \kern -\theight \vbox to \theight{%
   \rightline{\rlap{\box\z@}}%
   \vss
   }%
  }%
 }%
\def\qed{%
 \ifhmode\unskip\nobreak\fi\ifmmode\ifinner\else\hskip5\p@\fi\fi
 \hbox{\hskip5\p@\vrule width4\p@ height6\p@ depth1.5\p@\hskip\p@}%
 }%
\def\miss{\hbox{\vrule height2\p@ width 2\p@ depth\z@}}%
\def\tcol#1{{\baselineskip=6\p@ \vcenter{#1}} \Column}  %
\def\newfmtname{LaTeX2e}
  \DeclareOldFontCommand{\rm}{\normalfont\rmfamily}{\mathrm}
  \DeclareOldFontCommand{\sf}{\normalfont\sffamily}{\mathsf}
  \DeclareOldFontCommand{\tt}{\normalfont\ttfamily}{\mathtt}
  \DeclareOldFontCommand{\bf}{\normalfont\bfseries}{\mathbf}
  \DeclareOldFontCommand{\it}{\normalfont\itshape}{\mathit}
  \DeclareOldFontCommand{\sl}{\normalfont\slshape}{\@nomath\sl}
  \DeclareOldFontCommand{\sc}{\normalfont\scshape}{\@nomath\sc}
\def\alpha{{\Greekmath 010B}}%
\def\beta{{\Greekmath 010C}}%
\def\gamma{{\Greekmath 010D}}%
\def\delta{{\Greekmath 010E}}%
\def\epsilon{{\Greekmath 010F}}%
\def\zeta{{\Greekmath 0110}}%
\def\eta{{\Greekmath 0111}}%
\def\theta{{\Greekmath 0112}}%
\def\iota{{\Greekmath 0113}}%
\def\kappa{{\Greekmath 0114}}%
\def\lambda{{\Greekmath 0115}}%
\def\mu{{\Greekmath 0116}}%
\def\nu{{\Greekmath 0117}}%
\def\xi{{\Greekmath 0118}}%
\def\pi{{\Greekmath 0119}}%
\def\rho{{\Greekmath 011A}}%
\def\sigma{{\Greekmath 011B}}%
\def\tau{{\Greekmath 011C}}%
\def\upsilon{{\Greekmath 011D}}%
\def\phi{{\Greekmath 011E}}%
\def\chi{{\Greekmath 011F}}%
\def\psi{{\Greekmath 0120}}%
\def\omega{{\Greekmath 0121}}%
\def\varepsilon{{\Greekmath 0122}}%
\def\vartheta{{\Greekmath 0123}}%
\def\varpi{{\Greekmath 0124}}%
\def\varrho{{\Greekmath 0125}}%
\def\varsigma{{\Greekmath 0126}}%
\def\varphi{{\Greekmath 0127}}%
\def\nabla{{\Greekmath 0272}}
\def\FindBoldGroup{%
   {\setbox0=\hbox{$\mathbf{x\global\edef\theboldgroup{\the\mathgroup}}$}}%
}
\def\Greekmath#1#2#3#4{%
    \if@compatibility
        \ifnum\mathgroup=\symbold
           \mathchoice{\mbox{\boldmath$\displaystyle\mathchar"#1#2#3#4$}}%
                      {\mbox{\boldmath$\textstyle\mathchar"#1#2#3#4$}}%
                      {\mbox{\boldmath$\scriptstyle\mathchar"#1#2#3#4$}}%
                      {\mbox{\boldmath$\scriptscriptstyle\mathchar"#1#2#3#4$}}%
        \else
           \mathchar"#1#2#3#4%
        \fi 
    \else 
        \FindBoldGroup
        \ifnum\mathgroup=\theboldgroup 
           \mathchoice{\mbox{\boldmath$\displaystyle\mathchar"#1#2#3#4$}}%
                      {\mbox{\boldmath$\textstyle\mathchar"#1#2#3#4$}}%
                      {\mbox{\boldmath$\scriptstyle\mathchar"#1#2#3#4$}}%
                      {\mbox{\boldmath$\scriptscriptstyle\mathchar"#1#2#3#4$}}%
        \else
           \mathchar"#1#2#3#4%
        \fi     	    
	  \fi}
\newif\ifGreekBold  \GreekBoldfalse
\let\SAVEPBF=\pbf
\def\pbf{\GreekBoldtrue\SAVEPBF}%
  \newcounter{equationnumber}  
  \def\mathletters{%
     \addtocounter{equation}{1}
     \edef\@currentlabel{\theequation}%
     \setcounter{equationnumber}{\c@equation}
     \setcounter{equation}{0}%
     \edef\theequation{\@currentlabel\noexpand\alph{equation}}%
  }
    \def\BibTeX{{\rm B\kern-.05em{\sc i\kern-.025em b}\kern-.08em
                 T\kern-.1667em\lower.7ex\hbox{E}\kern-.125emX}}}{}%
\def\AmS{{\protect\usefont{OMS}{cmsy}{m}{n}%
                A\kern-.1667em\lower.5ex\hbox{M}\kern-.125emS}}}{}%
\def\@@eqncr{\let\@tempa\relax
    \ifcase\@eqcnt \def\@tempa{& & &}\or \def\@tempa{& &}%
      \else \def\@tempa{&}\fi
     \@tempa
     \if@eqnsw
        \iftag@
           \@taggnum
        \else
           \@eqnnum\stepcounter{equation}%
        \fi
     \fi
     \global\tag@false
     \global\@eqnswtrue
     \global\@eqcnt\z@\cr}
\def\TCItag{\@ifnextchar*{\@TCItagstar}{\@TCItag}}
\def\@TCItag#1{%
    \global\tag@true
    \global\def\@taggnum{(#1)}}
\def\@TCItagstar*#1{%
    \global\tag@true
    \global\def\@taggnum{#1}}
\def\dfrac#1#2{{\displaystyle {#1 \over #2}}}%
\def\QDATOP#1#2{{\displaystyle {#1 \atop #2}}}%
\let\DOTSI\relax
\def\RIfM@{\relax\ifmmode}%
\def\FN@{\futurelet\next}%
\def\iint{\DOTSI\intno@\tw@\FN@\ints@}%
\def\iiint{\DOTSI\intno@\thr@@\FN@\ints@}%
\def\iiiint{\DOTSI\intno@4 \FN@\ints@}%
\def\idotsint{\DOTSI\intno@\z@\FN@\ints@}%
\def\ints@{\findlimits@\ints@@}%
\newif\iflimtoken@
\newif\iflimits@
\def\findlimits@{\limtoken@true\ifx\next\limits\limits@true
 \else\ifx\next\nolimits\limits@false\else
 \limtoken@false\ifx\ilimits@\nolimits\limits@false\else
 \ifinner\limits@false\else\limits@true\fi\fi\fi\fi}%
\def\multint@{\int\ifnum\intno@=\z@\intdots@                          
 \else\intkern@\fi                                                    
 \ifnum\intno@>\tw@\int\intkern@\fi                                   
 \ifnum\intno@>\thr@@\int\intkern@\fi                                 
 \int}
\def\multintlimits@{\intop\ifnum\intno@=\z@\intdots@\else\intkern@\fi
 \ifnum\intno@>\tw@\intop\intkern@\fi
 \ifnum\intno@>\thr@@\intop\intkern@\fi\intop}%
\def\intic@{%
    \mathchoice{\hskip.5em}{\hskip.4em}{\hskip.4em}{\hskip.4em}}%
\def\negintic@{\mathchoice
 {\hskip-.5em}{\hskip-.4em}{\hskip-.4em}{\hskip-.4em}}%
\def\ints@@{\iflimtoken@                                              
 \def\ints@@@{\iflimits@\negintic@
   \mathop{\intic@\multintlimits@}\limits                             
  \else\multint@\nolimits\fi                                          
  \eat@}
 \else                                                                
 \def\ints@@@{\iflimits@\negintic@
  \mathop{\intic@\multintlimits@}\limits\else
  \multint@\nolimits\fi}\fi\ints@@@}%
\def\intkern@{\mathchoice{\!\!\!}{\!\!}{\!\!}{\!\!}}%
\def\plaincdots@{\mathinner{\cdotp\cdotp\cdotp}}%
\def\intdots@{\mathchoice{\plaincdots@}%
 {{\cdotp}\mkern1.5mu{\cdotp}\mkern1.5mu{\cdotp}}%
 {{\cdotp}\mkern1mu{\cdotp}\mkern1mu{\cdotp}}%
 {{\cdotp}\mkern1mu{\cdotp}\mkern1mu{\cdotp}}}%
\def\RIfM@{\relax\protect\ifmmode}
\def\text{\RIfM@\expandafter\text@\else\expandafter\mbox\fi}
\let\nfss@text\text
\def\text@#1{\mathchoice
   {\textdef@\displaystyle\f@size{#1}}%
   {\textdef@\textstyle\tf@size{\firstchoice@false #1}}%
   {\textdef@\textstyle\sf@size{\firstchoice@false #1}}%
   {\textdef@\textstyle \ssf@size{\firstchoice@false #1}}%
   \glb@settings}
\def\textdef@#1#2#3{\hbox{{%
                    \everymath{#1}%
                    \let\f@size#2\selectfont
                    #3}}}
\newif\iffirstchoice@
\def\Let@{\relax\iffalse{\fi\let\\=\cr\iffalse}\fi}%
\def\vspace@{\def\vspace##1{\crcr\noalign{\vskip##1\relax}}}%
\def\multilimits@{\bgroup\vspace@\Let@
 \baselineskip\fontdimen10 \scriptfont\tw@
 \advance\baselineskip\fontdimen12 \scriptfont\tw@
 \lineskip\thr@@\fontdimen8 \scriptfont\thr@@
 \lineskiplimit\lineskip
 \vbox\bgroup\ialign\bgroup\hfil$\m@th\scriptstyle{##}$\hfil\crcr}%
\def\Sb{_\multilimits@}%
\def\endSb{\crcr\egroup\egroup\egroup}%
\def\Sp{^\multilimits@}%
\newdimen\ex@
\def\rightarrowfill@#1{$#1\m@th\mathord-\mkern-6mu\cleaders
 \hbox{$#1\mkern-2mu\mathord-\mkern-2mu$}\hfill
 \mkern-6mu\mathord\rightarrow$}%
\def\leftarrowfill@#1{$#1\m@th\mathord\leftarrow\mkern-6mu\cleaders
 \hbox{$#1\mkern-2mu\mathord-\mkern-2mu$}\hfill\mkern-6mu\mathord-$}%
\def\leftrightarrowfill@#1{$#1\m@th\mathord\leftarrow
\mkern-6mu\cleaders
 \hbox{$#1\mkern-2mu\mathord-\mkern-2mu$}\hfill
 \mkern-6mu\mathord\rightarrow$}%
\def\overrightarrow{\mathpalette\overrightarrow@}%
\def\overrightarrow@#1#2{\vbox{\ialign{##\crcr\rightarrowfill@#1\crcr
 \noalign{\kern-\ex@\nointerlineskip}$\m@th\hfil#1#2\hfil$\crcr}}}%
\def\overleftarrow{\mathpalette\overleftarrow@}%
\def\overleftarrow@#1#2{\vbox{\ialign{##\crcr\leftarrowfill@#1\crcr
 \noalign{\kern-\ex@\nointerlineskip}$\m@th\hfil#1#2\hfil$\crcr}}}%
\def\overleftrightarrow{\mathpalette\overleftrightarrow@}%
\def\overleftrightarrow@#1#2{\vbox{\ialign{##\crcr
   \leftrightarrowfill@#1\crcr
 \noalign{\kern-\ex@\nointerlineskip}$\m@th\hfil#1#2\hfil$\crcr}}}%
\def\underrightarrow{\mathpalette\underrightarrow@}%
\def\underrightarrow@#1#2{\vtop{\ialign{##\crcr$\m@th\hfil#1#2\hfil
  $\crcr\noalign{\nointerlineskip}\rightarrowfill@#1\crcr}}}%
\def\underleftarrow{\mathpalette\underleftarrow@}%
\def\underleftarrow@#1#2{\vtop{\ialign{##\crcr$\m@th\hfil#1#2\hfil
  $\crcr\noalign{\nointerlineskip}\leftarrowfill@#1\crcr}}}%
\def\underleftrightarrow{\mathpalette\underleftrightarrow@}%
\def\underleftrightarrow@#1#2{\vtop{\ialign{##\crcr$\m@th
  \hfil#1#2\hfil$\crcr
 \noalign{\nointerlineskip}\leftrightarrowfill@#1\crcr}}}%
\def\qopnamewl@#1{\mathop{\operator@font#1}\nlimits@}
\let\nlimits@\displaylimits
\def\setboxz@h{\setbox\z@\hbox}
\def\varlim@#1#2{\mathop{\vtop{\ialign{##\crcr
 \hfil$#1\m@th\operator@font lim$\hfil\crcr
 \noalign{\nointerlineskip}#2#1\crcr
 \noalign{\nointerlineskip\kern-\ex@}\crcr}}}}
 \def\rightarrowfill@#1{\m@th\setboxz@h{$#1-$}\ht\z@\z@
  $#1\copy\z@\mkern-6mu\cleaders
  \hbox{$#1\mkern-2mu\box\z@\mkern-2mu$}\hfill
  \mkern-6mu\mathord\rightarrow$}
\def\leftarrowfill@#1{\m@th\setboxz@h{$#1-$}\ht\z@\z@
  $#1\mathord\leftarrow\mkern-6mu\cleaders
  \hbox{$#1\mkern-2mu\copy\z@\mkern-2mu$}\hfill
  \mkern-6mu\box\z@$}
\def\projlim{\qopnamewl@{proj\,lim}}
\def\injlim{\qopnamewl@{inj\,lim}}
\def\varinjlim{\mathpalette\varlim@\rightarrowfill@}
\def\varprojlim{\mathpalette\varlim@\leftarrowfill@}
\def\varliminf{\mathpalette\varliminf@{}}
\def\varliminf@#1{\mathop{\underline{\vrule\@depth.2\ex@\@width\z@
   \hbox{$#1\m@th\operator@font lim$}}}}
\def\varlimsup{\mathpalette\varlimsup@{}}
\def\varlimsup@#1{\mathop{\overline
  {\hbox{$#1\m@th\operator@font lim$}}}}
\def\align{\@verbatim \frenchspacing\@vobeyspaces \@alignverbatim
You are using the "align" environment in a style in which it is not defined.}
\let\csname endalign*\endcsname =\endtrivlist
\def\alignat{\@verbatim \frenchspacing\@vobeyspaces \@alignatverbatim
You are using the "alignat" environment in a style in which it is not defined.}
\let\csname endalignat*\endcsname =\endtrivlist
\def\xalignat{\@verbatim \frenchspacing\@vobeyspaces \@xalignatverbatim
You are using the "xalignat" environment in a style in which it is not defined.}
\let\csname endxalignat*\endcsname =\endtrivlist
\def\gather{\@verbatim \frenchspacing\@vobeyspaces \@gatherverbatim
You are using the "gather" environment in a style in which it is not defined.}
\let\csname endgather*\endcsname =\endtrivlist
\def\multiline{\@verbatim \frenchspacing\@vobeyspaces \@multilineverbatim
You are using the "multiline" environment in a style in which it is not defined.}
\let\csname endmultiline*\endcsname =\endtrivlist
\def\arrax{\@verbatim \frenchspacing\@vobeyspaces \@arraxverbatim
You are using a type of "array" construct that is only allowed in AmS-LaTeX.}
\def\tabulax{\@verbatim \frenchspacing\@vobeyspaces \@tabulaxverbatim
You are using a type of "tabular" construct that is only allowed in AmS-LaTeX.}
\let\csname endarrax*\endcsname =\endtrivlist
\let\csname endtabulax*\endcsname =\endtrivlist
 \def\endequation{%
     \ifmmode\ifinner 
      \iftag@
        \addtocounter{equation}{-1} 
        $\hfil
           \displaywidth\linewidth\@taggnum\egroup \endtrivlist
        \global\tag@false
        \global\@ignoretrue   
      \else
        $\hfil
           \displaywidth\linewidth\@eqnnum\egroup \endtrivlist
        \global\tag@false
        \global\@ignoretrue 
      \fi
     \else   
      \iftag@
        \addtocounter{equation}{-1} 
        \eqno \hbox{\@taggnum}
        \global\tag@false%
        $$\global\@ignoretrue
      \else
        \eqno \hbox{\@eqnnum}
        $$\global\@ignoretrue
      \fi
     \fi\fi
 } 
 \newif\iftag@ \tag@false
 \def\TCItag{\@ifnextchar*{\@TCItagstar}{\@TCItag}}
 \def\@TCItag#1{%
     \global\tag@true
     \global\def\@taggnum{(#1)}}
 \def\@TCItagstar*#1{%
     \global\tag@true
     \global\def\@taggnum{#1}}
     \def\tag{\@ifnextchar*{\@tagstar}{\@tag}}
     \def\@tag#1{%
         \global\tag@true
         \global\def\@taggnum{(#1)}}
     \def\@tagstar*#1{%
         \global\tag@true
         \global\def\@taggnum{#1}}
\begin{document}

\begin{center}
{\Large On Proof Theory in Computer Science\medskip }

L. Gordeev, E. H. Haeusler

\textit{Universit\"{a}t T\"{u}bingen, Ghent University, PUC Rio de Janeiro}

l\texttt{ew.gordeew@uni-tuebingen.de}\textit{,}

\textit{PUC Rio de Janeiro}

\texttt{hermann@inf.puc-rio.br}
\end{center}

\section{Introduction}

The subject \emph{logic in computer science} should entail proof theoretic
applications. So the question arises whether open problems in computational
complexity can be solved by advanced proof theoretic techniques. In
particular, consider the complexity classes $\mathbf{NP}$, $\mathbf{coNP}$
and $\mathbf{PSPACE}$. It is well-known that $\mathbf{NP}$ and $\mathbf{coNP}
$ are contained in $\mathbf{PSPACE}$, but till recently precise
characterization of these relationships remained open. Now \cite{GH1}, \cite
{GH2} (see also \cite{GH3}) presented proofs of corresponding equalities $%
\mathbf{NP}=\mathbf{coNP}=\mathbf{PSPACE}$. These results were obtained by
appropriate proof theoretic tree-to-dag compressing techniques to be briefly
explained below. But let us first recall basic definitions of complexity
classes involved.

\subsection{Complexity classes}

Recall standard definitions of the complexity classes $\mathbf{NP}$, $%
\mathbf{coNP}$ and $\mathbf{PSPACE}$. A given language $L\subseteq \left\{
0,1\right\} ^{\ast }$\ is in $\mathbf{NP}$, resp. $\mathbf{coNP}$, if there
exists a polynomial $p$ and a polynomial-time TM $M$ such that for every $%
x\in \left\{ 0,1\right\} ^{\ast }$: 
\begin{equation*}
\begin{array}{c}
\quad \quad \ \fbox{$x\in L\Leftrightarrow \left( \exists u\in \left\{
0,1\right\} ^{p\left( \left| x\right| \right) }\right) \!M\left( x,u\right)
=1$}\qquad \qquad \qquad \left( \mathbf{NP}\right) \\ 
\quad \text{resp. }\fbox{$x\in L\Leftrightarrow \left( \forall u\in \left\{
0,1\right\} ^{p\left( \left| x\right| \right) }\right) $\negthinspace $%
\,M\left( x,u\right) =1$}\qquad \qquad \qquad \left( \mathbf{coNP}\right)
\end{array}
\end{equation*}
That is to say, a given $x\in \left\{ 0,1\right\} ^{\ast }$ is in $L$ iff $M$%
's execution on input $\left( x,u\right) $ provides output $1$ for some
(resp. every) $u\in \left\{ 0,1\right\} ^{\ast }$ of the length $\left|
u\right| \leq p\left( \left| x\right| \right) $, where $p$ and $M$ are
determined by $x$. Note that $\mathbf{coNP}$ is complementary to $\mathbf{NP}
$ (and vice versa), i.e. $L\in \mathbf{coNP}\Leftrightarrow L\notin \mathbf{%
NP}$. However it is unclear a priori whether\ symmetric difference $\left( 
\mathbf{NP\setminus coNP}\right) \mathbf{\cup }\left( \mathbf{coNP\setminus
NP}\right) $ is empty or not, as $card\left( \left\{ 0,1\right\} ^{p\left(
\left| x\right| \right) }\right) $ is exponential in $x$. In the former case
we'll have $\mathbf{NP=coNP}$, which seems more natural and/or plausible, as
it reflects an idea of logical equivalence between model theoretical (re: $%
\mathbf{NP}$) and proof theoretical (re: $\mathbf{coNP}$) interpretations of
non-deterministic polynomial-time computability.

Now $L\subseteq \left\{ 0,1\right\} ^{\ast }$\ is in $\mathbf{PSPACE}$ if
there exists a polynomial $p$ and a TM $M$ such that for every input $x\in
\left\{ 0,1\right\} ^{\ast }$, the total number of non-blank locations that
occur during $M$'s execution on $x$ is at most $p\left( \left| x\right|
\right) $, and $x\in L\Leftrightarrow M\left( x\right) =1$. Thus $\mathbf{%
PSPACE}$ requires polynomial upper bounds only on the space -- but not time
-- of entire computation. It is well-known (and not hard to prove) that $%
\mathbf{PSPACE=coPSPACE}$, while $\mathbf{NP}$ and $\mathbf{coNP}$ are
contained in $\mathbf{PSPACE}$. It is unclear a priori whether\ at least one
of $\mathbf{NP}$, $\mathbf{coNP}$ is a proper subclass of $\mathbf{PSPACE}$.
It is clear, however, that the assumption $\mathbf{NP=PSPACE}$ implies $%
\mathbf{NP=coNP}$ (via $\mathbf{PSPACE=coPSPACE}$).

\subsection{Logic and proof systems}

\emph{Classical propositional logic} provides natural interpretations of $%
\mathbf{NP}$ and $\mathbf{coNP}$. Namely, the well-known propositional
satisfiability and validity problems $SAT$ and $VAL$ are, respectively, $%
\mathbf{NP}$- and $\mathbf{coNP}$-complete. That is, an $L$ canonically
encoding the set of satisfiable (resp. valid) propositional formulas is
universal for the whole class $\mathbf{NP}$ (resp. $\mathbf{coNP}$).
However,\ classical proof systems usually correlated with $VAL$ are less
helpful for the comparison $\mathbf{NP}$ vs $\mathbf{coNP}$, as the size of
conventional proofs of tautologies $x$ use to be exponential in $\left|
x\right| $. It seems that \emph{minimal} and \emph{intuitionistic} \emph{%
propositional logics} \cite{Joh}, \cite{PraMa} provide us with more suitable
refinements . Recall that the minimal logic is determined by the axioms $%
\alpha \rightarrow \left( \beta \rightarrow \alpha \right) $, $\left( \alpha
\rightarrow \left( \beta \rightarrow \gamma \right) \right) \rightarrow
\left( \left( \alpha \rightarrow \beta \right) \rightarrow \left( \alpha
\rightarrow \gamma \right) \right) $\ and rule \emph{modus ponens} $\fbox{$%
\dfrac{\alpha \quad \quad \alpha \rightarrow \beta }{\beta }$}$ in standard
Hilbert-style formalism whose vocabulary includes propositional variables
and propositional connective `$\rightarrow $' ($\alpha $, $\beta $, $\gamma $%
, etc. denote corresponding formulas). The intuitionistic logic extends
minimal one by adding one propositional constant $\bot $ (falsity) and new
axiom $\bot \rightarrow \alpha $. It is well-known that there are
polynomial-size validity-preserving embeddings of formulas in classical into
intuitionistic and intuitionistic into minimal logic, respectively. Apart
from Hilbert-style formalism, proof systems for minimal and intuitionistic
logic include Gentzen-style \emph{sequent calculus} (SC) and
Gentzen-Prawitz-style \emph{natural deductions} (ND). Both admit two
well-known proof-optimization: \emph{cut elimination} in SC and \emph{%
normalization} in ND. Cut elimination approach provides sound and complete
systems of inferences without \emph{cut rule} that is equivalent to the
modus ponens. Inferences in the resulting \emph{cutfree} SC systems satisfy
a sort of \emph{subformula property} (: all premise formulas occur as
(sub)formulas in the conclusions), which enables better proof search
strategies. We can also assume that the heights of cutfree proofs
(derivations) are linear in the weights of conclusions, although such
constrain is not obvious for intuitionistic and/or minimal logic, see \cite
{Hud}, \cite{GH1}. In ND, the normalization allows to use just \emph{normal}
proofs that are known to satisfy \emph{weak subformula property} (: every
formula occurring in a maximal thread occurs as (sub)formula in the
conclusion), see \cite{Prawitz}. However, there are no polynomial upper
bounds on the heights of arbitrary normal ND.

These optimizations have been elaborated for standard tree-like versions of
SC and ND. Note that tree-like approach can't provide polynomial upper
bounds on the size of resulting proofs. To achieve this goal we formalize
another idea of \emph{horizontal compression}. That is, in a given tree-like
proof we wish to merge all nodes labeled with identical objects (sequents or
formulas) occurring on the same level so that in the compressed \emph{%
dag-like} proof every level will contain mutually different objects. In the
case of SC even the compressed polynomial-height dag-like proofs still would
be too large due to possibly exponential number of distinct sequents
occurring in it. Now consider a ``short''\ normal tree-like ND whose height
is polynomial in the weight of conclusion. By the weak subformula property
we observe that the total weight of distinct (sub)formulas occurring in it
is polynomial in the weight of conclusion. As ND proofs operate with single
formulas (not sequents!), compressing this tree-like ND proof will provide
us with a desired polynomial-weight dag-like deduction. However, such
compressed dag-like deduction requires a modified notion of provability, as
merging different occurrences of identical formulas appearing as conclusions
in the same level of deduction might require a new \emph{separation} rule $%
\left( S\right) $%
\begin{equation*}
\fbox{$\left( S\right) :\dfrac{\overset{n\ times}{\overbrace{\alpha \quad
\cdots \quad \alpha }}}{\alpha \ }\ $($n$ arbitrary) }
\end{equation*}
whose identical premises are understood disjunctively: ``\emph{if at least
one premise is proved then so is the conclusion}'' (in contrast to ordinary
inferences: ``\emph{if all premises are proved then so are the conclusions}%
''). The notion of provability is modified accordingly such that proofs are
locally correct deductions assigned with appropriate sets of closed threads
that satisfy special conditions of \emph{local coherency} (in contrast to
ordinary local correctness, the local coherency is not verifiable in
polynomial time). These locally coherent threads are inherited by the
underlying closed tree-like threads. The required ``small''
polynomial-weight proof now arises by collapsing $\left( S\right) $ to plain
repetitions 
\begin{equation*}
\fbox{$\left( R\right) :\dfrac{\alpha }{\alpha \ }$}
\end{equation*}
with respect to the appropriately chosen premises of $\left( S\right) $. The
choice is made non-deterministically using the set of locally coherent
threads in question.

Keeping this in mind consider the $\mathbf{NP}$-complete Hamiltonian graph
problem and let $\rho $\ be a purely implicational formula expressing in
standard form that a given (simple and directed) graph $G$ has no
Hamiltonian cycles. We observe that the canonical tree-like proof search for 
$\rho $ in the minimal ND with standard inferences 
\begin{equation*}
\fbox{$\left( \rightarrow I\right) :\dfrac{\QDATOP{\QDATOP{\left[ \alpha %
\right] }{\vdots }}{\beta }}{\alpha \rightarrow \beta }$}\quad \fbox{$\left(
\rightarrow E\right) :\dfrac{\alpha \quad \alpha \rightarrow \beta }{\beta \ 
}$}
\end{equation*}
yields a normal tree-like proof $\partial $ whose height is polynomial in $%
\left| G\right| $ (and hence $\left| \rho \right| $), provided that $G$ is
non-Hamiltonian. Since $\partial $ is normal, it will obey the requested
polynomial upper bounds in question, and hence the weight of its dag-like
compression will be polynomially bounded, as desired. Summing up, for any
given non-Hamiltonian graph $G$ there is some polynomial-weight dag-like ND
refutation of the existence of Hamiltonian cycles in $G$. Note that
polynomial-weight ND proofs (tree- or dag-like) have polynomial-time
certificates (\cite{GH2}: Appendix), while the non-hamiltoniancy of simple
and directed graphs is $\mathbf{coNP}$-complete. Hence $\mathbf{coNP}$ is in 
$\mathbf{NP}$, which yields $\mathbf{NP=coNP}$.

To handle our main assertion $\mathbf{NP=PSPACE}$ we recall that the
validity of the minimal logic under consideration is known to be $\mathbf{%
PSPACE}$-complete. Moreover, every minimal tautology is provable in
Hudelmaier's cutfree SC (abbr.: HSC) for minimal logic \cite{Hud} by a
tree-like derivation whose height is linear in the weight of conclusion.
Furthermore, straightforward ND interpretation of such tree-like input in
HSC yields corresponding ``short'' (though not necessarily normal) tree-like
proof in ND for minimal logic whose total weight of distinct (sub)formulas
is polynomial in the weight of conclusion \cite{GH1}. Now the latter
tree-like ND proof is horizontally compressible to a polynomial-weight
dag-like proof by the same method as sketched above with respect to
``short'' normal ND. This yields $\mathbf{NP=PSPACE}$. A more detailed
presentation is as follows.

\section{Survey of proofs}

\subsection{Basic tree-like and dag-like ND}

Our basic ND calculus for minimal logic, \textsc{NM}$_{\rightarrow }$,
includes two basic inferences 
\begin{equation*}
\fbox{$\left( \rightarrow I\right) :\dfrac{\QDATOP{\QDATOP{\left[ \alpha %
\right] }{\vdots }}{\beta }}{\alpha \rightarrow \beta }$}\ ,\quad \fbox{$\
\left( \rightarrow E\right) :\dfrac{\alpha \quad \quad \alpha \rightarrow
\beta }{\beta }$}
\end{equation*}
and one auxiliary repetition rule $\fbox{$\left( R\right) :\dfrac{\alpha }{%
\alpha \ }$}$, where $\left[ \alpha \right] $ in $\left( \rightarrow
I\right) $\ indicates that all $\alpha $-leaves occurring above $\beta $%
-node exposed are \emph{discharged} assumptions (cf. \cite{Prawitz}). In 
\textsc{NM}$_{\rightarrow }$, tree-like deductions are understood as finite
rooted at most binary-branching trees whose nodes are labeled with purely
implicational formulas ($\alpha $, $\beta $, $\gamma $, etc.) that are
ordered according to the inferences exposed, as usual in proof theory,
whereas dag-like deductions are the analogous finite rooted dags. Thus for
any node $x$ in a tree-like deduction $\partial $, the set of all nodes
occurring below $x$ in $\partial $ is linearly ordered. This constrain is
lacking in dag-like deductions. Note that in dag-like \textsc{NM}$%
_{\rightarrow }$ deductions, all nodes can have at most two premises (=
children), but arbitrary many conclusions (= parents) \footnote{%
{\footnotesize This follows from standard conditions of the local
correctness. }}, whereas the latter is forbidden in the tree-like case.

\begin{definition}
A given (whether tree- or dag-like) \textsc{NM}$_{\rightarrow }$-deduction $%
\partial $ \emph{proves} its root-formula $\rho $ (abbr.: $\partial \vdash
\rho $) iff every maximal thread connecting the root with a leaf labeled $%
\alpha $ is closed (= discharged), i.e. it contains a $\left( \rightarrow
I\right) $ with conclusion $\alpha \rightarrow \beta $, for some $\beta $. A
purely implicational formula $\rho $ is \emph{valid in minimal logic} iff
there exists a tree-like \textsc{NM}$_{\rightarrow }$-deduction $\partial $
that proves $\rho \,$; such $\partial $ is called a \emph{proof} of $\rho $.
\end{definition}

\begin{remark}
Tree-like constraint in the definition of validity is inessential.

That is, for any dag-like $\partial \in \,$\textsc{NM}$_{\rightarrow }$ with
root-formula $\rho $, if $\partial \vdash \rho $ then $\rho $ is valid in
minimal logic. Because any given dag-like $\partial $ can be unfolded into a
tree-like deduction $\partial ^{\prime }$ by straightforward
thread-preserving top down recursion. To this end every node $x\in \partial $
with $n>1$\ distinct conclusions should be replaced by $n$ distinct but
identically labeled nodes $x_{1},\cdots ,x_{n}\in \partial ^{\prime }$ to be
connected with corresponding single conclusions. This operation obviously
preserves the closure of threads, i.e. $\partial \vdash \rho $ infers $%
\partial ^{\prime }\vdash \rho $.
\end{remark}

Formal verification of the assertion $\partial \vdash \rho $ is simple, as
follows -- whether for tree-like or, generally, dag-like $\partial $. Every
node $x\in \partial $ is assigned, by top-down recursion, a set of
assumptions $A\left( x\right) $ such that:

\begin{enumerate}
\item  $A\left( x\right) :=\left\{ \alpha \right\} $ if $x$ is a leaf
labeled $\alpha $,

\item  $A\left( x\right) :=A\left( y\right) $ if $x$ is the conclusion of $%
\left( R\right) $ with premise $y$,

\item  $A\left( x\right) :=A\left( y\right) \setminus \left\{ \alpha
\right\} $ if $x$ is the conclusion of $\left( \rightarrow I\right) $ with
label $\alpha \rightarrow \beta $\ and premise $y$,

\item  $A\left( x\right) :=A\left( y\right) \cup A\left( z\right) $ if $x$
is the conclusion of $\left( \rightarrow E\right) $ with premises $y,$ $z$.
\end{enumerate}

This easily yields

\begin{lemma}
Let $\partial \in \,$\textsc{NM}$_{\rightarrow }$ (whether tree- or
dag-like). Then $\partial \vdash \rho \Leftrightarrow A\left( r\right)
=\emptyset $ holds with respect to standard set-theoretic interpretations of
``$\,\cup $'' and ``$\,\setminus $'' in $A\left( r\right) $, where $r$ and $%
\rho $\ are the root and the root-formula of $\partial $, respectively.
Moreover, $A\left( r\right) \overset{?}{=}\emptyset $ is verifiable by a
deterministic TM in $\left| \partial \right| $-polynomial time, where by $%
\left| \partial \right| $ we denote the weight of $\partial $.
\end{lemma}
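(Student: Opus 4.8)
The plan is to pin down exactly what the recursively defined sets $A(x)$ record and then read off both halves of the lemma. The key claim is a characterization that I would prove by induction along a topological ordering of the finite rooted dag $\partial$ (premises of an inference preceding its conclusion): for every node $x$,
\[
A(x)=\{\,\gamma : \text{some thread from a leaf labelled }\gamma\text{ up to }x\text{ crosses no }(\rightarrow I)\text{ with conclusion }\gamma\rightarrow\delta\,\}.
\]
For a leaf the only thread up to $x$ is the trivial one, which crosses no inference at all, so the right-hand side is $\{\gamma\}$, matching clause~1. For the inductive step I would match the remaining three clauses against this reading: a thread extended through $(R)$ or through $(\rightarrow E)$ never loses a discharging $(\rightarrow I)$, giving $A(x)=A(y)$ and $A(x)=A(y)\cup A(z)$ (note that every thread reaching the conclusion of an $(\rightarrow E)$ factors through exactly one of its two premises); and extending a thread through a $(\rightarrow I)$ with conclusion $\gamma\rightarrow\delta$ closes precisely those threads still open for $\gamma$ and affects no others, which is exactly $A(x)=A(y)\setminus\{\gamma\}$. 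The one place the dag hypothesis must be invoked explicitly is that each node is the conclusion of a \emph{unique} inference, so that the recursion is well defined and every thread reaching $x$ passes through the premise(s) of that inference.

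Granting the characterization, the equivalence $\partial\vdash\rho\Leftrightarrow A(r)=\emptyset$ is immediate at the root $r$: by the definition of $\partial\vdash\rho$, it means that every maximal root-to-leaf thread is discharged, i.e.\ for no leaf-label $\gamma$ is there a thread from that leaf down to $r$ avoiding a $(\rightarrow I)$ with conclusion $\gamma\rightarrow\delta$; by the characterization this says precisely that $\gamma\notin A(r)$ for every $\gamma$, i.e.\ $A(r)=\emptyset$. The remark above already lets us disregard whether $\partial$ is tree- or dag-like, and in any case the argument is uniform in both.

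For the complexity statement, note that every formula ever inserted into some $A(x)$ is a label occurring in $\partial$, so there are at most $|\partial|$ such formulas and each has weight at most $|\partial|$; hence each $A(x)$ is a set of at most $|\partial|$ formulas, representable in space $O(|\partial|^{2})$. A deterministic TM computes all the $A(x)$ in one topological sweep over the at most $|\partial|$ nodes, each step being a copy (for $(R)$), a union of two stored sets (for $(\rightarrow E)$), or a single-element deletion (for $(\rightarrow I)$), each a polynomial-time set operation, and then tests whether the set attached to $r$ is empty. The total running time is polynomial in $|\partial|$, as claimed.

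I expect the only genuinely delicate point to be the induction in the dag-like case: making the notion ``thread open for $\gamma$ at $x$'' precise when $x$ has several parents and a single leaf may reach $x$ along several distinct threads, and verifying that the union/difference bookkeeping in clauses~2--4 tracks the \emph{existence} of an open thread (the correct reading) rather than some stronger ``all threads'' condition. Once that setup is in place the four clauses fall out mechanically, and the tree-like case is just the special instance in which every node has at most one parent.
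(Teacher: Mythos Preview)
Your proposal is correct and follows essentially the same route as the paper's proof, which is only a two-line sketch: the paper says the equivalence ``easily follows by induction on the height of $\partial$'' and that the complexity claim ``is completely analogous to the well-known polynomial-time decidability of the circuit value problem.'' Your explicit invariant---that $A(x)$ is exactly the set of leaf-labels $\gamma$ for which some thread from a $\gamma$-leaf to $x$ avoids every discharging $(\rightarrow I)$---is the natural way to make that induction precise, and your topological-sweep description is exactly what the circuit-value analogy amounts to; so there is no substantive difference in approach, only in the level of detail supplied.
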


\begin{proof}
The equivalence easily follows by induction on the height of $\partial $.
The second assertion is completely analogous to the well-known
polynomial-time decidability of the circuit value problem.
\end{proof}

\begin{definition}
Tree-like\emph{\ }\textsc{NM}$_{\rightarrow }$-deduction $\partial $ with
the root-formula $\rho $ is called \emph{polynomial}, resp. \emph{%
quasi-polynomial}, if\ its weight (= total number of symbols), resp. height
plus total weight of distinct formulas,\ is polynomial in the weight of
conclusion, $\left| \rho \right| $.
\end{definition}

\begin{theorem}
Any quasi-polynomial tree-like proof $\partial \vdash \rho $ can be
compressed into a polynomial dag-like proof $\partial ^{\ast }\vdash \rho $.
\end{theorem}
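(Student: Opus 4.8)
The plan is to carry out the ``horizontal compression'' outlined in the Introduction in three stages, and then read off the weight bound. \emph{Stage 1 (level-wise merging).} Write $h$ for the height of $\partial$ and $d$ for the total weight of its distinct formulas; by the quasi-polynomial hypothesis both are bounded by a polynomial in $\left| \rho \right| $. Stratify $\partial $ by levels, with the root at level $0$, and for each level $k\leq h$ and each formula $\alpha $ occurring at level $k$ introduce exactly one node of the compressed object labeled $\alpha $, redirecting all edges accordingly. Since the formulas occurring at level $k$ are among the distinct formulas of $\partial $, the labels at level $k$ have total weight $\leq d$, hence the merged structure has weight $O(h\cdot d)$ --- already the required polynomial bound. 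The price of the merge is that one merged node may now be forced to serve as the conclusion of several instances of $\left( \rightarrow I\right) $, $\left( \rightarrow E\right) $, $\left( R\right) $ inherited from its pre-images (two merged $\left( \rightarrow E\right) $-conclusions would, for example, acquire four premises). We absorb this by admitting at such a node the separation rule $\left( S\right) $ of the Introduction, whose premises --- read disjunctively --- are the conclusions of the individual merged inferences; the number of pairwise distinct such inferences at a node is itself polynomial in $d$ (an $\left( \rightarrow E\right) $ with a fixed conclusion is determined by its minor premise, which ranges over the $\leq d$ worth of formulas one level up), so the enlarged dag-with-$\left( S\right) $ still has polynomial weight and is locally correct by construction.

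\emph{Stage 2 (transporting provability).} Equip the dag-with-$\left( S\right) $ with the family of closed threads inherited, through the merge, from the closed maximal threads of $\partial $. Because this family descends from an honest tree-like proof it is closed and meets the local coherency conditions; in particular the conflicting discharge scopes created by the merge are reconciled consistently by it.

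\emph{Stage 3 (collapsing $\left( S\right) $ to $\left( R\right) $).} Guided by that coherent family, select at each $\left( S\right) $-node one of its premises, the selections made compatibly along the threads of the family, and replace each $\left( S\right) $ by the corresponding ordinary inference --- a repetition $\left( R\right) $ when the retained premise carries the same formula, and the relevant $\left( \rightarrow I\right) $ or $\left( \rightarrow E\right) $ otherwise. What remains is a genuine dag-like \textsc{NM}$_{\rightarrow }$-deduction $\partial ^{\ast }$, still of weight $O(h\cdot d)$; and since, by the compatibility of the selections, every maximal thread of $\partial ^{\ast }$ belongs to the coherent family and is therefore closed, we obtain $\partial ^{\ast }\vdash \rho $ in the sense of the Definition.

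Stages 1 and 2 are essentially bookkeeping on levels, threads and weights. The genuine obstacle is Stage 3: to prove that a globally coherent system of premise-selections at the $\left( S\right) $-nodes exists, and that collapsing along it preserves closure of all maximal threads simultaneously. This is exactly the point at which the local coherency apparatus --- which, as the Introduction warns, is not verifiable in polynomial time --- must be developed precisely and exploited, and it is the technical core of the whole argument.
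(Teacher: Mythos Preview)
Your three-stage outline is exactly the paper's two-folded compression $\partial \hookrightarrow \partial^{\flat} \hookrightarrow \partial^{\ast}$: Stage~1 is the level-wise merge followed by the insertion of $(S)$ (the paper's $\partial' \hookrightarrow \partial^{\flat}$, with the same bound $\left|\partial^{\flat}\right| \leq 2\,h(\partial)\cdot\phi(\partial)$), Stage~2 is the construction of the fundamental set of threads $\mathcal{F}^{\flat}$ as the image of the tree-like threads, and Stage~3 is the horizontal cleansing of Lemma~8. You also correctly locate the only non-routine step.

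The gap is that you state Stage~3 but do not carry it out; your phrase ``the selections made compatibly along the threads of the family'' is a specification, not a construction, and your conclusion ``every maximal thread of $\partial^{\ast}$ belongs to the coherent family'' is precisely the claim that needs an argument. The paper's Lemma~8 supplies the missing mechanism: a bottom-up recursion that, at each lowest unprocessed $(\rightarrow E)$-conclusion $x$, invokes density to pick a thread $\Theta \in \mathcal{F}^{\flat}$ through $x$, then invokes the third \emph{fst} condition (preservation of $(\rightarrow E)$) to obtain a companion $\Theta'$ through the other child with $\Theta\!\upharpoonright_{x} = \Theta'\!\upharpoonright_{x}$; this common initial segment fixes the $\circledS$-eliminations below $x$ once and for all, so that the choices at different $(S)$-nodes are forced to cohere. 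The upshot is that the surviving maximal threads of the cleansed $\partial^{\ast}$ are exactly the threads in the sub-\emph{fst} $\mathcal{F}_0^{\flat}$ accumulated by the recursion, hence closed. Without making the role of the third \emph{fst} condition explicit in this way, there is no reason why independent premise-selections at different $(S)$-nodes should glue into threads that all lie in $\mathcal{F}^{\flat}$, and your Stage~3 remains a promise rather than a proof.
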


The mapping $\partial \hookrightarrow \partial ^{\ast }$ is obtained by a
two-folded horizontal compression\ $\partial \hookrightarrow \partial
^{\flat }\hookrightarrow \partial ^{\ast }$, where $\partial ^{\flat }$ is
dag-like deduction in the following modified ND that extends \textsc{NM}$%
_{\rightarrow }$ by the \emph{separation} rule $\left( S\right) $, cf.
Introduction.

\subsection{ND with the separation rule}

Recall that the \emph{separation} rule $\left( S\right) $%
\begin{equation*}
\fbox{$\left( S\right) :\dfrac{\overset{n\ times}{\overbrace{\alpha \quad
\cdots \quad \alpha }}}{\alpha \ }\ $($n$ arbitrary) }
\end{equation*}
is understood disjunctively: ``\emph{if at least one premise is proved then
so is the conclusion}'' (in contrast to ordinary inferences: ``\emph{if all
premises are proved then so are the conclusions}''). Let \textsc{NM}$%
_{\rightarrow }^{\flat }$ extend \textsc{NM}$_{\rightarrow }$ by adding a
new inference $\left( S\right) $. The notion of provability in \textsc{NM}$%
_{\rightarrow }^{\flat }$\ is modified as follows. To begin with, for any 
\textsc{NM}$_{\rightarrow }^{\flat }$\ deduction $\partial $ we modify our
basic definition of the set of assignments $\left\{ A\left( x\right) :x\in
\partial \right\} $ by adding to old recursive clauses 1--4 (see above) a
new clause 5 with new separation symbol $\circledS $ :

\begin{description}
\item  5. $A\left( x\right) =\circledS \left( A\left( y_{1}\right) ,\cdots
,\,A\left( y_{n}\right) \right) $ if $x$ is the conclusion of $\left(
S\right) $ with premises $y_{1},\cdots ,y_{n}$.
\end{description}

Having this done we stipulate

\begin{definition}
For any given (whether tree- or dag-like) deduction $\partial \in \,$\textsc{%
NM}$_{\rightarrow }^{\flat }$\ with root $r$ and root-formula $\rho $, $%
\partial $ is called a \emph{modified proof} of $\rho $ (abbr.: $\partial
\vdash ^{\flat }\rho $) if $A\left( r\right) $ reduces to $\emptyset $
(abbr.: $A\left( r\right) \vartriangleright \emptyset $) by standard
set-theoretic interpretations of ``$\,\cup $'', ``$\,\setminus $'' and
nondeterministic disjunctive valuations $\circledS \left( t_{1},\cdots
,\,t_{n}\right) :=t_{i}$, for any chosen $i\in \left\{ 1,\cdots ,n\right\} $%
. Obviously $\partial \vdash ^{\flat }\rho \Leftrightarrow \partial \vdash
\rho $ holds for every separation-free $\partial $.
\end{definition}

\begin{lemma}
For any $\partial $ as above, if $\partial \vdash ^{\flat }\rho $\ then $%
\rho $ is valid in minimal logic. Moreover, the assertion $\partial \vdash
^{\flat }\rho $ can be confirmed by a nondeterministic TM in $\left|
\partial \right| $-polynomial time.
\end{lemma}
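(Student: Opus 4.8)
The plan is to prove the two assertions in sequence, mirroring the structure of the earlier Lemma for $\partial\vdash\rho$. First I would establish soundness: if $\partial\vdash^{\flat}\rho$ then $\rho$ is valid in minimal logic. By Definition of $\vdash^{\flat}$, the assumption $\partial\vdash^{\flat}\rho$ means $A(r)\vartriangleright\emptyset$, i.e.\ there is some choice of values $i$ for each $\circledS$-node making the resulting set-theoretic expression evaluate to $\emptyset$. Fix such a choice. The idea is to use this choice to extract an honest \textsc{NM}$_{\rightarrow}$-proof: at each $(S)$-node with premises $y_{1},\dots,y_{n}$, the chosen valuation picks the $i$-th premise, so I would delete the other $n-1$ subdeductions and contract $(S)$ to the repetition rule $(R)$ applied to $y_{i}$. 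Carrying this pruning out top-down through all $(S)$-nodes yields a separation-free deduction $\partial'\in\,$\textsc{NM}$_{\rightarrow}$ (still possibly dag-like), and by construction the assignment $A$ recomputed on $\partial'$ agrees with the chosen evaluation of the old $A(r)$, hence $A(r_{\partial'})=\emptyset$. By the earlier Lemma this gives $\partial'\vdash\rho$, and by the Remark (unfolding a dag-like proof into a tree-like one) $\rho$ is valid in minimal logic.

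For the second assertion I would describe the nondeterministic verification procedure explicitly. Given $\partial$, the machine first checks local correctness of every inference (each node is a correct instance of $(\rightarrow I)$, $(\rightarrow E)$, $(R)$, or $(S)$, with the discharge bookkeeping for $(\rightarrow I)$), which is polynomial in $|\partial|$ exactly as in the circuit-value-problem argument cited before. Then it nondeterministically guesses, for each $(S)$-node, an index $i\in\{1,\dots,n\}$ among its premises; this is a polynomial amount of nondeterministic data since the total number of premise-edges is bounded by $|\partial|$. With these guesses fixed, clause 5 becomes the deterministic substitution $\circledS(t_{1},\dots,t_{n}):=t_{i}$, so the whole assignment $\{A(x):x\in\partial\}$ is now computed by plain $\cup$ and $\setminus$ operations by a single top-down (topological-order) pass over the dag, again polynomial-time just like the original $A(r)\overset{?}{=}\emptyset$ test. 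The machine accepts iff $A(r)=\emptyset$. Correctness of this procedure is precisely the definition of $A(r)\vartriangleright\emptyset$, i.e.\ of $\partial\vdash^{\flat}\rho$.

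The only subtlety worth spelling out — and I expect it to be the main point needing care rather than a deep obstacle — is that in a dag-like $\partial$ a single subdeduction may feed into several distinct $(S)$-nodes (or several conclusions generally), so the ``pruning'' in the soundness argument must be done coherently: one cannot locally delete a subdeduction that is still needed elsewhere. The clean way to handle this is to not physically delete anything but instead first unfold $\partial$ to its tree-like companion $\partial^{\prime\prime}$ via the Remark's thread-preserving recursion (this preserves $\vdash^{\flat}$ since it preserves the structure of $A(r)$ up to the same $\circledS$-expression), and only then perform the index-driven contraction of $(S)$ to $(R)$ on the tree, where each $(S)$-node has a unique parent context and pruning is unambiguous. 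Everything else is a routine induction on the height of $\partial$ matching clauses 1--5 against the set-theoretic/nondeterministic semantics, and a routine restatement of the circuit-value polynomial-time bound with a guessed bit-string of polynomial length prepended.
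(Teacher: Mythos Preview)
Your proposal is correct and follows essentially the same approach as the paper: fix a choice of indices witnessing $A(r)\vartriangleright\emptyset$, thin $\partial$ accordingly to an $(S)$-free \textsc{NM}$_{\rightarrow}$-deduction $\partial_{0}$ with $A(r)=\emptyset$, and invoke the earlier Lemma plus the dag-to-tree Remark. Your treatment is in fact more explicit than the paper's---you spell out the nondeterministic verifier (guess indices, then run the deterministic $A(r)\overset{?}{=}\emptyset$ check) and you flag the dag-sharing subtlety and resolve it by unfolding first---whereas the paper compresses all of this into a couple of sentences referring back to Lemma~3; but the underlying argument is the same.
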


\begin{proof}
The former assertion reduces to its trivial \textsc{NM}$_{\rightarrow }$
case (see above). For suppose that $A\left( r\right) \vartriangleright
\emptyset $ holds with respect to a successive nondeterministic valuation of
the occurrences $\circledS $. This reduction determines a successive
bottom-up thinning of $\partial $ that results in a ``cleansed'' $\left(
S\right) $-free subdeduction $\partial _{0}\in \,$\textsc{NM}$_{\rightarrow
}^{\flat }$. Thus $A\left( r\right) \vartriangleright \emptyset $ in $%
\partial $\ implies $A\left( r\right) =\emptyset $ in $\partial _{0}$. Since 
$\left( S\right) $ does not occur in $\partial _{0}$ anymore, we have $%
\partial _{0}\in \,$\textsc{NM}$_{\rightarrow }$, and hence $\partial
_{0}\vdash \rho $ holds by Lemma 3. So by previous considerations with
regard to \textsc{NM}$_{\rightarrow }$ we conclude that $\rho $ is valid in
minimal logic, which can be confirmed\ in $\left| \partial \right| $%
-polynomial time, as required (see Lemma 3).
\end{proof}

\subsection{Horizontal compression with cleansing}

In the sequel for any natural deduction $\partial $ we denote by $h\left(
\partial \right) $ and $\phi \left( \partial \right) $ the height of $%
\partial $ and the total weight of the set of distinct formulas occurring in 
$\partial $, respectively. Now we are prepared to explain proof of Theorem
5. For any tree-like \textsc{NM}$_{\rightarrow }$ proof $\partial $ of $\rho
\,$ let $\partial ^{\prime }\in \,$\textsc{NM}$_{\rightarrow }$ be its
horizontal compression defined by bottom-up recursion on $h\left( \partial
\right) $ such that for any $n\leq h\left( \partial \right) $, the $n^{th}$\
horizontal section of $\partial ^{\flat }$ is obtained by merging all nodes
with identical formulas occurring in the $n^{th}$\ horizontal section of $%
\partial $. The inferences in $\partial ^{\prime }$ are naturally inherited
by the ones in $\partial $. Obviously $\partial ^{\prime }$ is a dag-like
(not necessarily tree-like anymore) deduction with the root formula $\rho $.
However, $\partial ^{\prime }$ need not preserve the local correctness with
respect to basic inferences $\left( \rightarrow I\right) $, $\left(
\rightarrow E\right) $, $\left( R\right) $. For example, a compressed
multipremise configuration 
\begin{equation*}
\fbox{$\ \left( \rightarrow I,E\right) :\dfrac{\beta \quad \quad \gamma
\quad \quad \gamma \rightarrow \left( \alpha \rightarrow \beta \right) }{%
\alpha \rightarrow \beta }$}
\end{equation*}
that is obtained by merging identical conclusions $\alpha \rightarrow \beta $
of 
\begin{equation*}
\fbox{$\ \left( \rightarrow I\right) :\dfrac{\beta }{\alpha \rightarrow
\beta }$}\quad \text{and\quad }\fbox{$\ \left( \rightarrow E\right) :\dfrac{%
\gamma \quad \quad \gamma \rightarrow \left( \alpha \rightarrow \beta
\right) }{\alpha \rightarrow \beta }$}
\end{equation*}
is not a correct inference in \textsc{NM}$_{\rightarrow }$. To overcome this
trouble we upgrade $\partial ^{\prime }$ to a modified deduction $\partial
^{\flat }$ that separates such multiple premises using appropriate instances
of the separation rule $\left( S\right) $. For example, $\left( \rightarrow
I,E\right) $ as above should be replaced by this \textsc{NM}$_{\rightarrow
}^{\flat }$-correct configuration 
\begin{equation*}
\fbox{$\left( S\right) :\dfrac{\ \left( \rightarrow I\right) :\ \dfrac{\beta 
}{\alpha \rightarrow \beta \ }\quad \left( \rightarrow E\right) \ :\dfrac{%
\gamma \quad \quad \gamma \rightarrow \left( \alpha \rightarrow \beta
\right) }{\alpha \rightarrow \beta \ }}{\ \alpha \rightarrow \beta \ }$}%
\text{.}
\end{equation*}
This $\partial ^{\flat }$ is a locally correct dag-like (not necessarily
tree-like anymore) deduction in \textsc{NM}$_{\rightarrow }^{\flat }$ with
the root formula $\rho $. Moreover $\partial ^{\flat }$ is polynomial as $%
\left| \partial ^{\flat }\right| \leq 2\left| \partial ^{\prime }\right| $
and $\left| \partial ^{\prime }\right| \leq h\left( \partial \right) \times
\phi \left( \partial \right) $. However, we can't claim that $\partial
^{\flat }$ proves $\rho $ because arbitrary maximal dag-like threads in $%
\partial ^{\flat }$ can arise by concatenating different segments of
different threads in $\partial $, which can destroy the required closure
condition\ (cf. Definition 1). On the other hand, we know that all threads
in $\partial $ are closed, so let $\mathcal{F}^{\flat }$ be the dag-like
image in $\partial ^{\flat }$ of these tree-like threads under the mapping $%
\partial \hookrightarrow \partial ^{\flat }$. We observe that $\mathcal{F}%
^{\flat }$ satisfies the following three conditions of \emph{local coherency}%
, where $n:=h\left( \partial ^{\flat }\right) $ and for any (maximal
bottom-up) thread $\Theta =\left[ r=x_{0},\cdots ,x_{n}\right] \in \mathcal{F%
}^{\flat }$ and $i\leq n$ we let $\Theta \!\upharpoonright _{x_{i}}:=\left[
x_{0},\cdots ,x_{i}\right] $.

\begin{enumerate}
\item  $\mathcal{F}^{\flat }$ is dense in $\partial ^{\flat }$, i.e. $\left(
\forall u\in \partial ^{\flat }\right) \left( \exists \Theta \in \mathcal{F}%
^{\flat }\right) \left( u\in \Theta \right) $.

\item  Every $\Theta \in \mathcal{F}^{\flat }$ $\,$is closed, i.e. its
leaf-formula $\alpha \!\left( x_{n}\right) $ is discharged in $\Theta $.

\item  $\mathcal{F}$ preserves $\left( \rightarrow E\right) $, i.e.

$\left. 
\begin{array}{c}
\left( \forall \Theta \in \mathcal{F}^{\flat }\right) \left( \forall u\in
\Theta \right) \left( \forall v\neq w\in Child_{\partial ^{\flat }}\left(
u\right) :\ v\in \Theta \right) \qquad \qquad \qquad \ \ \  \\ 
\left( \exists \Theta ^{\prime }\in \mathcal{F}^{\flat }\right) \left( w\in
\Theta ^{\prime }\wedge \Theta \!\upharpoonright _{u}=\Theta ^{\prime
}\!\upharpoonright _{u}\right) .\qquad \quad \qquad \qquad \qquad .\qquad
\qquad \quad
\end{array}
\right. $
\end{enumerate}

In the sequel we call any $\mathcal{F}^{\flat }$ satisfying conditions of
local coherency the \emph{fundamental set of threads} (abbr.: \emph{fst}) in 
$\partial ^{\flat }$.

\begin{lemma}
Let $\partial ^{\flat }$ be any given locally correct dag-like \textsc{NM}$%
_{\rightarrow }^{\flat }$-deduction with root-formula $\rho $ that is
supplied with a fst $\mathcal{F}^{\flat }$. Then $\rho $ has a modified
dag-like proof $\partial ^{\ast }\subseteq \partial ^{\flat }$.
\end{lemma}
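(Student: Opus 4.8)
The plan is to use the fundamental set of threads $\mathcal{F}^{\flat}$ to select, for each occurrence of the rule $\left(S\right)$ in $\partial^{\flat}$, one of its premises; to collapse every such occurrence to a repetition $\left(R\right)$ retaining the selected premise; and then to \emph{cleanse} the result as in the proof of Lemma 7, i.e.\ delete the cut-off branches together with every node that is no longer reachable from the root. This yields a dag-like deduction $\partial^{\ast}\subseteq\partial^{\flat}$ which is \emph{separation-free}, so by Definition 6 it suffices to prove $\partial^{\ast}\vdash\rho$, and by Definition 1 this is literally the assertion that every maximal thread of $\partial^{\ast}$ is closed. Thus the task is to make the premise-selection coherently and then verify this closure property.

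\textbf{The deduction.} First I would check that $\partial^{\ast}$ is a genuine, locally correct dag-like \textsc{NM}$_{\rightarrow}$-deduction with root-formula $\rho$. The only sub-configurations of $\partial^{\flat}$ that are not already \textsc{NM}$_{\rightarrow}$-correct are the $\left(S\right)$-nodes and the merged multipremise configurations they were introduced to repair; collapsing every $\left(S\right)$ to an $\left(R\right)$ removes exactly these. Every $\left(\rightarrow E\right)$-node keeps both of its premises (those edges are never cut), the root and its label are untouched, and discarding unreachable nodes cannot destroy local correctness; hence $\partial^{\ast}$ is as claimed, and $\partial^{\ast}\subseteq\partial^{\flat}$.

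\textbf{Threads of $\partial^{\ast}$ lift into $\mathcal{F}^{\flat}$.} The core claim is that every initial segment $\left[r=x_{0},\ldots,x_{j}\right]$ of a thread of $\partial^{\ast}$ has the form $\Theta\!\upharpoonright_{x_{j}}$ for some $\Theta\in\mathcal{F}^{\flat}$; I would prove this by induction on $j$. The base $j=0$ is density (condition 1): some $\Theta\in\mathcal{F}^{\flat}$ meets $r$, and it must start there. For the step, let $x_{j+1}$ be the $\partial^{\ast}$-successor of $x_{j}$ and let $\Theta$ lift $\left[r,\ldots,x_{j}\right]$. If $x_{j}$ carries $\left(\rightarrow I\right)$ or $\left(R\right)$ it has a unique premise in $\partial^{\flat}$, so $\Theta$ already continues through $x_{j+1}$. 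If $x_{j}$ carries $\left(\rightarrow E\right)$ and $\Theta$ leaves it through the other child, condition 3 yields $\Theta'\in\mathcal{F}^{\flat}$ agreeing with $\Theta$ up to $x_{j}$ and leaving through $x_{j+1}$. If $x_{j}$ is the collapse of an $\left(S\right)$-node, so that $x_{j+1}$ is the selected premise, then the choice of the selection together with condition 3 again supplies such a $\Theta'$. Granting the claim, a maximal thread $\sigma$ of $\partial^{\ast}$ ends at a leaf of $\partial^{\ast}$; since every non-leaf of $\partial^{\flat}$ retains at least one premise in $\partial^{\ast}$, a leaf of $\partial^{\ast}$ is a leaf of $\partial^{\flat}$, and because a thread meets a leaf only as its last node we conclude $\sigma=\Theta$ for some $\Theta\in\mathcal{F}^{\flat}$. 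By condition 2 this $\Theta$, hence $\sigma$, is closed. Therefore every maximal thread of $\partial^{\ast}$ is closed, $\partial^{\ast}\vdash\rho$, and so $\partial^{\ast}\vdash^{\flat}\rho$.

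\textbf{Where the work is.} The delicate point is the $\left(S\right)$-case of the induction, and behind it the coherence of the selection: a single $\left(S\right)$-node is in general threaded by many members of $\mathcal{F}^{\flat}$ routed through different premises, so no one collapse agrees with all of them at once. One has to exploit precisely the local-coherency conditions — density together with the preservation of $\left(\rightarrow E\right)$, the latter applied also at $\left(S\right)$-nodes — to see that after cleansing the surviving threads, although they may be stitched together from segments of several distinct $\mathcal{F}^{\flat}$-threads, still lie in $\mathcal{F}^{\flat}$ and hence remain closed. Showing that the selection can be fixed consistently over all $\left(S\right)$-nodes simultaneously, and that the cleansing does not destroy the density/coherence on which the induction of the previous step relies, is the real content; everything else reduces to facts already established for separation-free deductions (Lemmas 3 and 7, Definitions 1 and 6). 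An alternative organisation would induct on the number of $\left(S\right)$-nodes, resolving one at a time and re-exhibiting a fst for the smaller deduction, but it meets the same core difficulty.
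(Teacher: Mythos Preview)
Your proposal is correct and follows the same overall strategy as the paper: select one premise at every $\left(S\right)$-node, collapse to $\left(R\right)$, cleanse, and then argue that every maximal thread of the resulting $\partial^{\ast}$ lies in $\mathcal{F}^{\flat}$ and is therefore closed. The organization differs, however. The paper does not fix the $\left(S\right)$-selections in advance; it runs a bottom-up recursion starting from the lowest $\left(\rightarrow E\right)$-conclusion, picks a thread $\Theta\in\mathcal{F}^{\flat}$ through it by density, invokes condition~3 \emph{only at $\left(\rightarrow E\right)$-nodes} to spawn companion threads through the other child, and lets the initial segments of these threads dictate the $\circledS$-eliminations (i.e.\ the $\left(S\right)$-selections) below. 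Your argument instead fixes the selections first and then runs an induction on thread length, invoking condition~3 uniformly at every branching node, \emph{including} $\left(S\right)$-nodes. That is legitimate because the formal statement of condition~3 quantifies over arbitrary $u$ and arbitrary pairs of children, not just $\left(\rightarrow E\right)$-nodes, despite the name ``$\left(\rightarrow E\right)$-preserving''; but you should say so explicitly, since the paper's own proof never appeals to condition~3 at an $\left(S\right)$-node. Once you grant this, your induction actually shows that \emph{any} selection whatsoever works --- the $\Theta$ may change at each step, but the final maximal thread is still some member of $\mathcal{F}^{\flat}$ --- so your closing worry that ``the selection can be fixed consistently over all $\left(S\right)$-nodes simultaneously'' is the ``real content'' overstates the difficulty: there is no consistency constraint to satisfy, and your own induction already handles it. In short: same idea, your packaging is cleaner, and the only point to sharpen is the explicit justification that condition~3 applies at $\left(S\right)$-nodes.
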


\begin{proof}
We show that $\mathcal{F}^{\mathrm{\flat }}$ determines successive
left-to-right $\circledS $-eliminations $\circledS \left( \!A\left(
y_{1}\right) ,\cdots ,\,A\left( y_{n}\right) \!\right) \!\hookrightarrow
\!A\left( y_{i}\right) $ inside $A\left( r\right) $ leading to a desired
reduction $A\left( r\right) \vartriangleright \emptyset $ (see basic
notations in 2.2). These eliminations together with a suitable sub-\emph{fst}
$\mathcal{F}_{0}^{\mathrm{\flat }}$ $\subseteq \mathcal{F}^{\mathrm{\flat }}$
arise as follows\ by bottom-up recursion along $\mathcal{F}^{\mathrm{\flat }%
} $. Let $x$ be a chosen lowest conclusion of $\left( \rightarrow E\right) $
in $\partial ^{\flat }$, if any exists. By the density of $\mathcal{F}^{%
\mathrm{\flat }}$, there exists $\Theta \in \mathcal{F}^{\mathrm{\flat }}$
with $x\in \Theta $; so let $\Theta \in \mathcal{F}_{0}^{\mathrm{\flat }}$.
Let $y$ and $z$\ be the two children of $x$ and suppose that $y\in \Theta $.
By the third, $\left( \rightarrow E\right) $-preserving \emph{fst} condition
there exists a $\Theta ^{\prime }\in \mathcal{F}^{\mathrm{\flat }}$ with $%
z\in \Theta ^{\prime }$ and $\Theta \!\upharpoonright _{x}=$ $\Theta
^{\prime }\!\upharpoonright _{x}$; so let $\Theta ^{\prime }\in \mathcal{F}%
_{0}^{\mathrm{\flat }}$ be the corresponding ``upgrade''of $\Theta $. If $%
z\in \Theta $ then let $\Theta ^{\prime }:=\Theta $. Note that $\Theta
\!\!\upharpoonright _{x}$ determines substitutions $A\left( u\right)
=\circledS \left( A\left( v_{1}\right) ,\cdots ,\,A\left( v_{n}\right)
\right) :=A\left( v_{i}\right) $ in all parents of the $\left( S\right) $%
-conclusions $u$ occurring in both $\Theta $ and $\Theta ^{\prime }$\ below $%
x$, if any exist, and thereby all $\circledS $-eliminations $A\left(
u\right) \hookrightarrow A\left( v_{i}\right) $ in the corresponding
subterms of $A\left( r\right) $. The same procedure is applied to the nodes
occurring in $\Theta $ and $\Theta ^{\prime }$\ between $x$ and the next
lowest conclusions of $\left( \rightarrow E\right) $; this yields new closed
threads $\Theta ^{\prime \prime },\Theta ^{\prime \prime \prime },\cdots \in 
\mathcal{F}_{0}^{\mathrm{\flat }}\subseteq \mathcal{F}^{\flat }$ and $%
\circledS $-eliminations in the corresponding initial fragments of $A\left(
r\right) $. We keep doing this recursively until the list of remaining $%
\circledS $-occurrences in $\Theta \in \mathcal{F}_{0}^{\mathrm{\flat }}$ is
empty. The final ``cleansed'' $\circledS $-free conversion of $A\left(
r\right) $ is represented by a set of formulas that easily reduces to $%
\emptyset $ by ordinary set-theoretic interpretation of the remaining
operations ``$\,\cup \,$'' and ``$\,\setminus "$, since every $\Theta \in 
\mathcal{F}_{0}^{\mathrm{\flat }}$ involved is closed. The correlated
``cleansed'' deduction $\partial ^{\ast }$ obtained by substituting
corresponding instances of $\left( R\right) $ for thus eliminated $\left(
S\right) $ is a locally correct dag-like deduction of $\rho $ in the $\left(
S\right) $-free fragment of \textsc{NM}$_{\rightarrow }^{\mathrm{\flat }}$,
and hence it belongs to \textsc{NM}$_{\rightarrow }$. Moreover the set of
maximal threads in $\partial ^{\ast }$ is uniquely determined by the
remaining rules $\left( R\right) $, $\left( \rightarrow I\right) $, $\left(
\rightarrow E\right) $. By the definition these ``cleansed'' maximal threads
are all included in $\mathcal{F}^{\flat }$ thus being closed with respect to 
$\left( \rightarrow I\right) $. \footnote{{\footnotesize These threads may
be exponential in number, but our nondeterministic algorithm runs on the
polynomial set of nodes. }} This yields a desired reduction $A\left(
r\right) \vartriangleright \emptyset $, i.e. $A\left( r\right) =\emptyset $,
in $\partial ^{\ast }$. Hence $\partial ^{\ast }$ proves $\rho $ in \textsc{%
NM}$_{\rightarrow }$. Obviously $\partial ^{\ast }$ is a subdeduction of $%
\partial ^{\flat }$.\medskip
\end{proof}

Operation $\partial ^{\flat }\hookrightarrow \partial ^{\ast }$ is also
referred to as \emph{horizontal cleansing}.

\begin{corollary}
By Lemma 7, the assertion of the lemma implies that $\rho $ is valid in
minimal logic. Actually $\partial ^{\ast }$ involved is separation-free,
which yields $\partial ^{\ast }\vdash \rho $.
\end{corollary}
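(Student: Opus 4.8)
The plan is to read Corollary 9 as a direct splice of Lemma 8 into Lemma 7, and then to sharpen the conclusion by noting that the deduction $\partial^{\ast}$ produced by Lemma 8 actually lives in the $\left( S\right)$-free fragment. First I would fix the data in force from Lemma 8: a locally correct dag-like $\partial^{\flat}\in\,$\textsc{NM}$_{\rightarrow}^{\flat}$ with root-formula $\rho$ carrying a fundamental set of threads $\mathcal{F}^{\flat}$. Lemma 8 then hands me a modified dag-like proof $\partial^{\ast}\subseteq\partial^{\flat}$, i.e. $\partial^{\ast}\vdash^{\flat}\rho$. Feeding $\partial^{\ast}$ into Lemma 7 (in the role of its generic $\partial$) immediately gives that $\rho$ is valid in minimal logic — and, as a free byproduct, that $\partial^{\ast}\vdash^{\flat}\rho$ is confirmable by a nondeterministic TM in time polynomial in $\left|\partial^{\ast}\right|$.

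For the second clause I would revisit the construction of $\partial^{\ast}$ inside the proof of Lemma 8. There $\partial^{\ast}$ is obtained from (a subdeduction of) $\partial^{\flat}$ by replacing every instance of $\left( S\right)$ that survives the left-to-right $\circledS$-elimination recursion with a plain repetition $\left( R\right)$; since that recursion is run until no $\circledS$-occurrences remain on the threads of the chosen sub-\emph{fst} $\mathcal{F}_{0}^{\flat}\subseteq\mathcal{F}^{\flat}$, and by density every node of $\partial^{\ast}$ (in particular every former $\left( S\right)$-conclusion) lies on such a thread, the resulting $\partial^{\ast}$ contains no occurrence of $\left( S\right)$ at all. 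Hence $\partial^{\ast}$ belongs to the separation-free fragment of \textsc{NM}$_{\rightarrow}^{\flat}$, that is, $\partial^{\ast}\in\,$\textsc{NM}$_{\rightarrow}$. By the closing remark of Definition 6, $\partial\vdash^{\flat}\rho\Leftrightarrow\partial\vdash\rho$ for separation-free $\partial$, so $\partial^{\ast}\vdash^{\flat}\rho$ upgrades to $\partial^{\ast}\vdash\rho$; combined with Remark 2 (the tree-like constraint in the definition of validity is inessential, so a dag-like witness already suffices), this re-derives validity of $\rho$ through the original, unmodified provability notion, as claimed.

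I do not expect a genuine obstacle here, since the statement is essentially bookkeeping over Lemmas 7 and 8. The one point deserving care is the global separation-freeness of $\partial^{\ast}$: one must be sure that the cleansing in Lemma 8 strips away \emph{every} $\left( S\right)$ appearing in $\partial^{\ast}$, not merely those on the threads explicitly processed first; this is exactly guaranteed by the termination criterion of the bottom-up recursion together with the density clause of the \emph{fst} conditions, and it is already asserted in the proof of Lemma 8, so the corollary only has to quote it.
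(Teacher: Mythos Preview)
Your proposal is correct and matches the paper's reasoning: the corollary has no separate proof in the paper, being just a restatement of what is already established at the end of the proof of Lemma 8 (that $\partial^{\ast}$ lies in the $\left(S\right)$-free fragment and hence in \textsc{NM}$_{\rightarrow}$, so $\partial^{\ast}\vdash\rho$) together with the invocation of Lemma 7 for validity. Your unpacking of why $\partial^{\ast}$ is globally separation-free and your appeal to the equivalence in Definition 6 are exactly the intended justifications.
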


This completes our proof of Theorem 5 and together with Lemma 3 yields

\begin{corollary}
Any given $\rho $ is valid in minimal logic iff there exists a polynomial
dag-like proof $\partial ^{\ast }$ of $\rho $, in \textsc{NM}$_{\rightarrow
} $. Moreover, the assertion $\partial ^{\ast }\vdash \rho $ can be
confirmed by a deterministic TM in $\left| \rho \right| $-polynomial time.
\end{corollary}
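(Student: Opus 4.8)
The plan is to read Corollary 10 as an assembly of results already available: Remark 2 for the ``$\Leftarrow$'' direction, the quasi-polynomial tree-like witness for minimal tautologies together with Theorem 5 for the ``$\Rightarrow$'' direction, and Lemma 3 for the ``moreover'' clause. The ``$\Leftarrow$'' half is immediate: if $\rho$ has a (polynomial) dag-like proof $\partial^{\ast}\vdash\rho$ in \textsc{NM}$_{\rightarrow}$, then by Remark 2 the thread-preserving top-down unfolding of $\partial^{\ast}$ is a tree-like \textsc{NM}$_{\rightarrow}$-deduction $\partial^{\prime}$ with root-formula $\rho$ all of whose maximal threads remain closed, so $\partial^{\prime}\vdash\rho$ and $\rho$ is valid in minimal logic by Definition 1; the size blow-up of the unfolding is harmless here, as only existence of a tree-like proof is needed.

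For the ``$\Rightarrow$'' half, suppose $\rho$ is valid in minimal logic. First I would produce a quasi-polynomial tree-like proof. By Hudelmaier's cutfree sequent calculus $\mathrm{HSC}$ for purely implicational minimal logic \cite{Hud}, $\rho$ has an $\mathrm{HSC}$-derivation of height linear in $\left| \rho \right|$; the straightforward natural-deduction reading of that derivation \cite{GH1} is a (not necessarily normal) tree-like \textsc{NM}$_{\rightarrow}$-proof $\partial\vdash\rho$ whose height and whose total weight of distinct subformulas are both polynomial in $\left| \rho \right|$, i.e.\ $\partial$ is quasi-polynomial in the sense of Definition 4. Theorem 5 now applies verbatim: the two-stage horizontal compression $\partial\hookrightarrow\partial^{\flat}\hookrightarrow\partial^{\ast}$ --- the second stage being the horizontal cleansing driven by the fundamental set of threads of Lemma 8, with $\partial^{\ast}$ separation-free by Corollary 9 --- yields a dag-like proof $\partial^{\ast}\vdash\rho$ in \textsc{NM}$_{\rightarrow}$ with $\left| \partial^{\ast}\right| \leq \left| \partial^{\flat}\right| \leq 2\left| \partial^{\prime}\right| \leq 2\,h\left( \partial \right) \times \phi\left( \partial \right)$, which is polynomial in $\left| \rho \right|$.

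For the ``moreover'' clause I would observe that this $\partial^{\ast}$ is separation-free, hence lies in \textsc{NM}$_{\rightarrow}$ itself, so Lemma 3 applies to it unchanged: $\partial^{\ast}\vdash\rho$ is equivalent to $A\left( r\right) =\emptyset$ for the root $r$ of $\partial^{\ast}$ (with the standard set-theoretic readings of ``$\cup$'' and ``$\setminus$''), and the second assertion of Lemma 3 guarantees that this equality is decidable by a deterministic TM in time polynomial in $\left| \partial^{\ast}\right|$, the bottom-up evaluation of the sets $A\left( x\right)$ through the dag being a direct analogue of the polynomial-time solvability of the circuit value problem. Since $\left| \partial^{\ast}\right|$ is polynomial in $\left| \rho \right|$, this is polynomial in $\left| \rho \right|$, as claimed.

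I expect essentially all of the substance to sit in the one step that imports the quasi-polynomial tree-like witness from \cite{Hud} and \cite{GH1}: everything downstream (Theorem 5, and through it Lemma 8) is already proved, and both the ``$\Leftarrow$'' direction and the verification clause fall out of Remark 2 and Lemma 3. The delicate point is that the natural-deduction reading of the linear-height $\mathrm{HSC}$-derivation, although not normal, must still contain only polynomially many distinct subformulas, so that the bound $h\left( \partial \right) \times \phi\left( \partial \right)$ stays polynomial and Theorem 5 is applicable; this is precisely the feature that connects the whole construction to the $\mathbf{PSPACE}$-completeness of minimal validity and hence to the intended conclusion $\mathbf{NP}=\mathbf{PSPACE}$.
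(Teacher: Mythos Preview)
Your proposal is correct and follows essentially the same route as the paper, which derives Corollary 10 from Theorem 5 together with Lemma 3; you simply make explicit the two ingredients the paper leaves terse --- Remark 2 for the ``$\Leftarrow$'' direction, and the Hudelmaier/\cite{GH1} quasi-polynomial tree-like witness for the ``$\Rightarrow$'' direction (which the paper itself records a few lines later as Claim 16 and Lemma 17). Your handling of the ``moreover'' clause via Lemma 3 and the bound $\left|\partial^{\ast}\right|\leq\left|\partial^{\flat}\right|\leq 2\,h(\partial)\times\phi(\partial)$ is exactly the intended one.
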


\subsection{Consequences for computational complexity}

\subsubsection{Case $\mathbf{NP}$ vs $\mathbf{coNP}$}

Since normal ND proofs satisfy weak subformula property, we have

\begin{lemma}
Any given normal\ tree-like \textsc{NM}$_{\rightarrow }$-proof $\partial $ of $%
\rho $ whose height $h\left( \partial \right) $ is polynomial in $\left|
\rho \right| $ is quasi-polynomial.
\end{lemma}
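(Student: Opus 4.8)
The plan is to bound the two ingredients of the definition of \emph{quasi-polynomial} (Definition~4) separately. By hypothesis the height $h(\partial)$ is already polynomial in $\left|\rho\right|$, so it suffices to show that $\phi(\partial)$ — the total weight of the set of distinct formulas occurring in $\partial$ — is polynomial in $\left|\rho\right|$ as well; then $h(\partial)+\phi(\partial)$ is polynomial and $\partial$ is quasi-polynomial by definition.

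To bound $\phi(\partial)$ I would appeal to the weak subformula property of normal deductions (\cite{Prawitz}, recalled in the Introduction). Since $\partial$ is in fact a \emph{proof} of $\rho$, every maximal thread connecting the root with a leaf is closed, so there are no open assumptions and the property reduces to the clean statement that every formula occurring on such a thread is a subformula of the conclusion $\rho$. Now, because $\partial$ is a finite rooted tree, every node $x\in\partial$ lies on at least one maximal thread: concatenate the unique path from $x$ down to the root with any path from $x$ up through its subtree to a leaf. Moreover the auxiliary rule $\left(R\right)$ is formula-preserving and so introduces no new formula. Hence every formula labelling a node of $\partial$ is a subformula of $\rho$.

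The remainder is elementary counting. A purely implicational formula of weight $w$ has at most $w$ subformula occurrences, one per node of its syntax tree, hence at most $w$ distinct subformulas. Therefore the set of distinct formulas occurring in $\partial$ has at most $\left|\rho\right|$ members, each of weight at most $\left|\rho\right|$, whence $\phi(\partial)\leq\left|\rho\right|^{2}$. Combined with the assumed polynomial bound on $h(\partial)$ this yields a polynomial bound on $h(\partial)+\phi(\partial)$, so $\partial$ is quasi-polynomial, as claimed.

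The only point that needs genuine care is the invocation of the weak subformula property itself: one must check that the notion of normal form and of ``maximal thread'' for \textsc{NM}$_{\rightarrow}$ matches the hypotheses of Prawitz's theorem, and in particular that adjoining the repetition rule $\left(R\right)$ and restricting to the implicational fragment do not spoil it. I expect this to be routine — $\left(R\right)$ can simply be contracted away before normalizing, and the subformula property for normal $\rightarrow$-deductions is standard — whereas the two remaining steps, namely that the maximal threads cover the whole tree and the counting of subformulas, are immediate.
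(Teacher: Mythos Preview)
Your proposal is correct and follows essentially the same approach as the paper: the paper's entire argument is the single clause ``since normal ND proofs satisfy weak subformula property,'' and you have simply unpacked this, showing that normality forces every formula in $\partial$ to be a subformula of $\rho$ and then bounding $\phi(\partial)\leq\left|\rho\right|^{2}$ by counting. Your additional care about the repetition rule $\left(R\right)$ and the coverage of nodes by maximal threads is warranted but, as you anticipate, routine.
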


Let $P$ be a chosen NP-complete problem and purely implicational formula $%
\rho $ be valid iff $P$ has no positive solution. In particular, let $P$ be
the Hamiltonian graph problem and $\rho $ express in standard way that a
given graph $G$ has no Hamiltonian cycles. Suppose that the canonical proof
search of $\rho $ in \textsc{NM}$_{\rightarrow }$ yields a normal tree-like
proof $\partial $ whose height is polynomial in $\left| G\right| $ (and
hence $\left| \rho \right| $), provided that $G$ is non-Hamiltonian. Then by
the last lemma $\partial $ will be polynomially bounded. That is, we argue
as follows.

\begin{lemma}
Let $P$ be the Hamiltonian graph problem and purely implicational formula $%
\rho $ express that a given graph $G$ has no Hamiltonian cycles. There
exists a normal tree-like \textsc{NM}$_{\rightarrow }$-proof of $\rho $ such
that $h\left( \partial \right) $ is polynomial in $\left| G\right| $ (and
hence $\left| \rho \right| $), provided that $G$ is non-Hamiltonian.
\end{lemma}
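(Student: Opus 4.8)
The plan is to construct $\partial$ directly as the tree produced by the \emph{canonical analytic proof search} for $\rho$ in \textsc{NM}$_{\rightarrow}$, and to read off both normality and the height bound from the shape of that search. First I would fix the encoding: following the standard translation of $\mathbf{NP}$ problems into purely implicational minimal logic (the implicational reduction that makes validity in minimal logic $\mathbf{NP}$-hard, referred to in the Introduction), $\rho$ is built from variables naming the vertices of $G$, its present edges, and the stages $1,\dots,n$ of a prospective Hamiltonian cycle, so that $|\rho|$ is polynomial in $|G|$ and, by the correctness of that reduction, $\rho$ is valid in minimal logic iff $G$ has no Hamiltonian cycle. From here on $G$ is fixed non-Hamiltonian, so $\rho$ is valid and (by the normalization theorem) has \emph{some} normal proof; the content of the lemma is that the canonical search produces one whose height is polynomial.

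Next I would analyse that search. Bottom-up it alternates two moves: decompose an implicational goal $\gamma\rightarrow\delta$ by $(\rightarrow I)$, putting $\gamma$ into scope; and, when the goal is an atom $p$, pick an assumption currently in scope whose final target is $p$, say $\delta_{1}\rightarrow\cdots\rightarrow\delta_{k}\rightarrow p$, apply $(\rightarrow E)$ along it, and recurse on the $\delta_{j}$. For this particular $\rho$ the search is forced into $n$ successive rounds, the $i$-th ``guessing'' the successor of vertex $i$ in a candidate cycle; each round is a branching over $\le n$ assumptions of $\rho$ matching the current atomic goal, and --- since those assumptions are genuine subformulas of $\rho$ --- each round costs only $\mathrm{poly}(|G|)$ proof-depth and a bounded discharge. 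After the $n$ rounds the discharged guesses describe a full permutation of the vertices; because $G$ has no Hamiltonian cycle, some required edge $(v_{i},v_{i+1})$ is absent, whence the corresponding atomic subgoal is derivable from the ``edge-missing'' part of $\rho$ by a fixed subproof of depth $\mathrm{poly}(|G|)$, and the branch closes. Thus $h(\partial)=n\cdot\mathrm{poly}(|G|)+\mathrm{poly}(|G|)=\mathrm{poly}(|G|)$, even though $\partial$ has on the order of $n^{n}$ leaves: a tree-like proof is allowed exponentially many leaves, and that is exactly where the combinatorics of non-Hamiltonicity is absorbed.

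Normality is then built in rather than recovered. Every $(\rightarrow E)$ in $\partial$ has as its major premise an assumption in scope or an $(\rightarrow E)$-conclusion on the same track, never an $(\rightarrow I)$-conclusion, so no maximal formula (an $(\rightarrow I)$-conclusion used as the major premise of an $(\rightarrow E)$) is ever created; equivalently, the $(\rightarrow I)$ steps that discharge the guess-assumptions always sit below, not above, the $(\rightarrow E)$ steps that consume them. Hence $\partial$ is normal, and by the preceding lemma (a normal tree-like proof of polynomial height is quasi-polynomial) it is quasi-polynomial, which is the form in which Lemma~12 will be used.

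The main obstacle is to make the second paragraph rigorous for one fixed encoding: one must define $\rho$ and the canonical search precisely enough that the search is a genuinely deterministic procedure, show it always terminates on a closed tree-like proof when $G$ is non-Hamiltonian (so that no backtracking past polynomial depth is ever needed, only exponential \emph{width}), and verify the two quantitative claims --- that each guessing round costs only $\mathrm{poly}(|G|)$ depth and that the leaf-verification ``this permutation is not a cycle of $G$'' has $\mathrm{poly}(|G|)$ depth. The delicate point is that the depth estimate and the analyticity of the search must hold simultaneously for the same $\rho$, so the encoding has to be chosen so that every $(\rightarrow E)$ major premise matches a bona fide subformula of $\rho$; granting these bookkeeping facts, the height bound and normality follow as sketched.
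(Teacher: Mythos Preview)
Your proposal is sound in outline and shares the essential combinatorial idea with the paper: the normal proof branches over all candidate orderings of the vertices, each branch has depth polynomial in $|G|$, each branch closes because the chosen ordering is not a Hamiltonian cycle, and normality holds because every major premise of $(\rightarrow E)$ is an assumption or a prior $(\rightarrow E)$-conclusion. Where you differ is in organization. The paper does \emph{not} work directly in \textsc{NM}$_{\rightarrow}$. It first writes a transparent CNF-style encoding $\alpha_G$ (Definition~14, clauses $A$--$E$) in full intuitionistic propositional logic with $\wedge$, $\vee$, $\bot$; uses Glivenko to pass from classical to intuitionistic validity of $\lnot\alpha_G$; establishes (Claim~15) a normal intuitionistic ND proof of $\alpha_G\rightarrow\bot$ of height polynomial in $n$; and only then applies the Statman--Haeusler translation $(\cdot)^{\star}$ into purely implicational minimal logic, quoting that this translation sends normal proofs to normal proofs with $h(\partial_{\rightarrow})=\mathcal{O}(h(\partial))$. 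The formula $\rho$ is then $(\lnot\alpha_G)^{\star}$.

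What each route buys: the paper's detour through full intuitionistic logic makes the encoding and the branch-by-branch analysis straightforward (with $\vee$ and $\wedge$ available, ``guess the $i$-th vertex'' is literally a disjunction elimination on clause $C$), and it offloads precisely the bookkeeping you flag as ``the main obstacle'' --- matching every $(\rightarrow E)$ major premise to a genuine subformula of the implicational $\rho$ --- onto a single off-the-shelf lemma about $(\cdot)^{\star}$. Your direct approach avoids the detour and the extra translation machinery, but you then have to do that subformula bookkeeping by hand for one fixed implicational $\rho$, which is exactly the delicate point you identify at the end.
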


Recall that polynomial ND proofs (whether tree- or dag-like) have
polynomial-time certificates, while the non-hamiltoniancy of simple and
directed graphs is $\mathbf{coNP}$-complete. Hence Corollary 10 yields

\begin{corollary}
$\mathbf{NP=coNP}$\textbf{\ }holds true.
\end{corollary}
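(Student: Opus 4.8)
The plan is to assemble the equality from the machinery already in place, treating the genuinely hard inputs --- Lemma 12 and Theorem 5 --- as given. First I would recall that by Lemma 12, whenever a simple directed graph $G$ is non-Hamiltonian, the canonical proof search in \textsc{NM}$_{\rightarrow}$ yields a normal tree-like proof $\partial\vdash\rho$ whose height $h(\partial)$ is polynomial in $\left|G\right|$, where $\rho$ is the purely implicational formula expressing, in the fixed standard encoding, that $G$ has no Hamiltonian cycle. Since this encoding $G\mapsto\rho$ is polynomial-time computable and validity-preserving in both directions, $\rho$ is valid in minimal logic if and only if $G$ is non-Hamiltonian.

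Next I would invoke Lemma 11: a normal tree-like proof of polynomial height automatically has the total weight of its distinct formulas polynomial in $\left|\rho\right|$ (by the weak subformula property), hence is quasi-polynomial. Theorem 5 then compresses $\partial$ into a polynomial dag-like proof $\partial^{\ast}\vdash\rho$ in \textsc{NM}$_{\rightarrow}$, i.e. $\left|\partial^{\ast}\right|$ is polynomial in $\left|\rho\right|$ and hence in $\left|G\right|$. By Corollary 10 --- equivalently, by the polynomial-time certificate property of \cite{GH2} --- the assertion $\partial^{\ast}\vdash\rho$ is verifiable by a deterministic Turing machine in time polynomial in $\left|\rho\right|$.

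Putting these together gives an $\mathbf{NP}$ algorithm for non-hamiltoniancy: on input $G$, compute $\rho$; nondeterministically guess a string of length polynomial in $\left|G\right|$; check that it codes a dag-like \textsc{NM}$_{\rightarrow}$-deduction $\partial^{\ast}$ with root-formula $\rho$; verify $\partial^{\ast}\vdash\rho$ deterministically in polynomial time; accept iff every check succeeds. Completeness of the guessing step is precisely the existence claim obtained above; soundness is Corollary 10, since a successful guess forces $\rho$ valid and therefore $G$ non-Hamiltonian. Hence the non-hamiltoniancy of simple directed graphs lies in $\mathbf{NP}$. As this problem is $\mathbf{coNP}$-complete, we get $\mathbf{coNP}\subseteq\mathbf{NP}$; complementing both sides yields $\mathbf{NP}\subseteq\mathbf{coNP}$, and therefore $\mathbf{NP}=\mathbf{coNP}$.

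The main obstacle is not really in this corollary at all --- it is a routing argument. The substance has been pushed into Lemma 12 (keeping the search depth polynomial for non-Hamiltonian $G$, which is where normalization bounds and the precise shape of the Hamiltonicity formula must be controlled) and into the correctness of the two-stage compression $\partial\hookrightarrow\partial^{\flat}\hookrightarrow\partial^{\ast}$ of Theorem 5 via the fundamental set of threads. The one point I would double-check here is bookkeeping: that the encoding $G\mapsto\rho$ is genuinely a poly-time, poly-size, two-way validity-preserving reduction, so that ``polynomial in $\left|\rho\right|$'' and ``polynomial in $\left|G\right|$'' are interchangeable throughout, and that the size bound on $\partial^{\ast}$ extracted from Theorem 5 is uniform over all inputs $G$.
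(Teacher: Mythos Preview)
Your proposal is correct and follows essentially the same route as the paper: Lemma 12 supplies a normal tree-like proof of polynomial height for the non-hamiltoniancy formula, Lemma 11 upgrades this to quasi-polynomial, Theorem 5 compresses it to a polynomial dag-like proof, and the polynomial-time verifiability of $\partial^{\ast}\vdash\rho$ (Corollary 10 / Lemma 3) then places non-hamiltoniancy in $\mathbf{NP}$, whence $\mathbf{coNP}\subseteq\mathbf{NP}$ by its $\mathbf{coNP}$-completeness. Your write-up is in fact more explicit than the paper's about the $\mathbf{NP}$ guess-and-check algorithm and about the bidirectionality of the encoding $G\mapsto\rho$, but the architecture is identical.
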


So it remains to prove Lemma 12. To this end, consider a simple \footnote{%
{\footnotesize A simple graph has no multiple edges. For every pair of nodes 
}$(v_{1},v_{2})${\footnotesize \ in the graph there is at most one edge from 
}$v_{1}${\footnotesize \ to }$v_{2}$.} directed graph $G=\langle
V_{G},E_{G}\rangle $, $card\left( V_{G}\right) =n$. A \emph{Hamiltonian path}
(or \emph{cycle}) in $G$ is a sequence of nodes $\mathcal{X}%
=v_{1}v_{2}\ldots v_{n}$, such that, the mapping $i\mapsto v_{i}$ is a
bijection of $\left[ n\right] =\{1,\cdots ,n\}$ onto $V_{G}$ and for every $%
0<i<n$ there exists an edge $(v_{i},v_{i+1})\in E_{G}$. The (decision)
problem whether or not there is a Hamiltonian path in $G$ is known to be
NP-complete (cf. e.g. \cite{arora}). If the answer is YES then $G$ is called
Hamiltonian. In order to verify that a given sequence of nodes $\mathcal{X}$%
, as above, is a Hamiltonian path it will suffice to confirm that:

\begin{enumerate}
\item  There are no repeated nodes in $\mathcal{X}$,

\item  No element $v\in V_{G}$ is missing in $\mathcal{X}$,

\item  For each pair $\left\langle v_{i}v_{j}\right\rangle $ in $\mathcal{X}$
there is an edge $(v_{i},v_{j})\in E_{G}$.
\end{enumerate}

It is readily seen that the conjunction of $1,2,3$ is verifiable by a
deterministic TM in $n$-polynomial time. Consider a natural formalization of
these conditions (cf. e.g. \cite{arora}) in propositional logic with one
constant $\bot $ (\emph{falsum}) and three connectives `$\wedge $', `$\vee $%
', `$\rightarrow $'.

\begin{definition}
For any $G=\langle V_{G},E_{G}\rangle $, $card(V_{G})=n>0$, as above,
consider propositional variables $X_{i,v}$, $i\in \left[ n\right] $, $v\in
V_{G}$. Informally, $X_{i,v}$ should express that vertex $v$ is visited in
the step $i$ in a path on $G$. Define propositional formulas $A-E$ as
follows and let $\alpha _{G}:=A\wedge B\wedge C\wedge D\wedge E$.

\begin{enumerate}
\item  \label{A} $A=\bigwedge_{v\in V}\left( X_{1,v}\vee \ldots \vee
X_{n,v}\right) $ (: every vertex is visited in $X$).

\item  \label{B} $B=\bigwedge_{v\in V}\bigwedge_{i\neq j}\left(
X_{i,v}\rightarrow \left( X_{j,v}\rightarrow \bot \right) \right) $ (: there
are no repetitions in $X$).

\item  \label{C} $C=\bigwedge_{i\in \left[ n\right] }\bigvee_{v\in V}X_{i,v}$
(: at each step at least one vertex is visited).

\item  \label{D} $D=\bigwedge_{v\neq w}\bigwedge_{i\in \left[ n\right]
}\left( X_{i,v}\rightarrow \left( X_{i,w}\rightarrow \bot \right) \right) $
(: at each step at most one vertex is visited).

\item  \label{E} $E=\bigwedge_{(v,w)\not\in E}\bigwedge_{i\in \left[ n-1%
\right] }\left( X_{i,v}\rightarrow \left( X_{i+1,w}\rightarrow \bot \right)
\right) $ (: if there is no edge from $v$ to $w$ then $w$ can't be visited
immediately after $v$).
\end{enumerate}
\end{definition}

Thus $G$ is Hamiltonian iff $\alpha _{G}$ is satisfiable. Denote by $%
SAT_{Cla}$ the set of satisfiable formulas in classical propositional logic
and by $TAUT_{Int}$ the set of tautologies in the intuitionistic one. Then
the following conditions hold: (1) $G$ is non-Hamiltonian iff $\alpha
_{G}\not\in SAT_{Cla}$, (2) $G$ is non-Hamiltonian iff $\lnot \alpha _{G}\in
TAUT_{Cla}$, (3) $G$ is non-Hamiltonian iff $\lnot \alpha _{G}\in TAUT_{Int}$%
. Glyvenko's theorem yields the equivalence between (2) to (3). Hence $G$ is
non-Hamiltonian iff there is an intuitionistic proof of $\lnot \alpha _{G}$.
Such proof is called a certificate for the non-hamiltoniancy of $G$. \cite
{Statman} (also \cite{Haeus}) 9presented a translation from formulas in full
propositional intuitionistic language into the purely implicational fragment
of minimal logic whose formulas are built up from $\rightarrow $ and
propositional variables. This translation employs new propositional
variables $q_{\gamma }$ for logical constants and complex propositional
formulas $\gamma $ (in particular, every $\alpha \vee \beta $ and $\alpha
\wedge \beta $ should be replaced by $q_{\alpha \vee \beta }$ and $q_{\alpha
\wedge \beta }$, respectively) while adding implicational axioms stating
that $q_{\gamma }$ is equivalent to $\gamma $ . For any propositional
formula $\gamma $, let $\gamma ^{\star }$ denote its translation into purely
implicational minimal logic in question. Note that $size\left( \gamma
^{\star }\right) \leq size^{3}\left( \gamma \right) $. Now $\gamma \in
TAUT_{Int}$ iff $\gamma ^{\star }$ is provable in the minimal logic.
Moreover, it follows from \cite{Statman}, \cite{Haeus} that for any normal
ND proof $\partial $ of $\gamma $ there is a normal proof $\partial
_{\rightarrow }$ of $\gamma ^{\star }$ in the corresponding ND system for
minimal logic, \textsc{NM}$_{\rightarrow }$, such that $h\left( \partial
_{\rightarrow }\right) =\mathcal{O}\left( h\left( \partial \right) \right) $%
. Thus in order to prove Lemma 12 it will suffice to establish

\begin{claim}
$G$ is non-Hamiltonian iff there exists a normal intuitionistic tree-like ND
proof of $\alpha _{G}\rightarrow \bot $ , i.e. $\lnot \alpha _{G}$, whose
height is polynomial in $n$.
\end{claim}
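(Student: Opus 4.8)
The plan is to prove the two directions separately, with essentially all the work in the forward implication. For the backward direction, the mere existence of an intuitionistic ND proof of $\alpha_G\rightarrow\bot$ gives $\lnot\alpha_G\in TAUT_{Int}$, whence by the equivalences recorded just before the Claim (Glivenko's theorem together with $(1)$–$(3)$) $\alpha_G\notin SAT_{Cla}$, i.e.\ $G$ is non-Hamiltonian; the polynomial height bound plays no role here.

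For the forward direction, assume $G$ is non-Hamiltonian and construct the proof $\partial$ explicitly in the ND calculus for intuitionistic propositional logic over $\{\wedge,\vee,\rightarrow,\bot\}$. The strategy is to derive $\bot$ from the undischarged assumption $\alpha_G$ and then perform a single $\rightarrow$-introduction discharging all leaf occurrences of $\alpha_G$; this will be the only introduction rule in $\partial$. To derive $\bot$, fix an enumeration $V_G=\{w_1,\dots,w_n\}$ and do a case analysis driven by the conjunct $A=\bigwedge_{v\in V_G}\left(X_{1,v}\vee\cdots\vee X_{n,v}\right)$: for $k=1,\dots,n$ in turn, extract from $\alpha_G$ (a fresh leaf, since $\partial$ is tree-like) the disjunction $X_{1,w_k}\vee\cdots\vee X_{n,w_k}$ by $\wedge$-eliminations and apply $\vee$-elimination with conclusion $\bot$, realized as $n-1$ nested binary $\vee$-eliminations. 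After all $n$ steps, a branch of this ``spine'' carries, for each $v\in V_G$, one discharged hypothesis $X_{\sigma(v),v}$, i.e.\ a function $\sigma:V_G\rightarrow[n]$, and $\bot$ still has to be derived at its top. Here one closes the branch using $\alpha_G$ once more: if $\sigma$ is not injective, choose $v\neq w$ with $\sigma(v)=\sigma(w)=i$, extract from $\alpha_G$ the conjunct $D$ and then its instance $X_{i,v}\rightarrow(X_{i,w}\rightarrow\bot)$, and apply modus ponens twice to the hypotheses $X_{i,v},X_{i,w}$; if $\sigma$ is injective it is a bijection, so $v_1\dots v_n$ with $v_i:=\sigma^{-1}(i)$ enumerates $V_G$, and since $G$ is non-Hamiltonian there is $i<n$ with $(v_i,v_{i+1})\notin E_G$, so $E$ contains $X_{i,v_i}\rightarrow(X_{i+1,v_{i+1}}\rightarrow\bot)$, which modus ponens twice on $X_{i,v_i}=X_{\sigma(v_i),v_i}$ and $X_{i+1,v_{i+1}}=X_{\sigma(v_{i+1}),v_{i+1}}$ turns into $\bot$. (The conjuncts $B,C$ are never used.)

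For the height bound: along any thread the spine costs $O(n)$ inferences per vertex ($n-1$ nested $\vee$-eliminations plus the $\wedge$-eliminations isolating one conjunct of $A$ from $\alpha_G$), hence $O(n^2)$ in all; isolating one conjunct of $D$ or $E$ from $\alpha_G$ costs at most their number of conjuncts, i.e.\ $O(n^3)$; two $\rightarrow$-eliminations and the final $\rightarrow$-introduction add $O(1)$; so $h(\partial)=O(n^3)$, polynomial in $n$ as required. Normality is immediate: apart from the terminal $\rightarrow$-introduction, $\partial$ uses elimination rules only, no elimination is applied to the conclusion of an introduction, and no $\vee$-elimination is applied to the conclusion of another $\vee$-elimination (each major premise of a $\vee$-elimination is either an assumption or the conclusion of a $\wedge$-elimination applied to $A$), so $\partial$ has no detour and no permutable $\vee$-elimination.

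I expect the main obstacle to be the height accounting rather than the existence of a refutation: one must ensure the case-analysis spine contains only polynomially many inferences per thread, which is exactly why branching on $A$ — producing a depth-$n$ tower of $\vee$-eliminations rather than an exponentially wide object — is the right move, and that extracting individual conjuncts of the large conjunctions $A,D,E$ out of $\alpha_G$ costs only polynomially much height (worst case linear in the number of conjuncts, regardless of how the conjunction is associated). The one remaining delicate point is the verification of normality in the presence of $\vee$ and $\bot$, and it is dispatched by the observation that the construction introduces nothing except the final implication.
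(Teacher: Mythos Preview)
Your proof is correct and follows the direct-construction approach that the paper merely calls ``straightforward'' and defers to \cite{GH3}: branch on the conjunct $A$ to obtain an assignment $\sigma:V_G\to[n]$ along each branch, close non-injective branches with $D$ and bijective ones with $E$ (using non-Hamiltonicity), and observe that the resulting derivation has polynomial height and is normal because the only introduction is the terminal $\rightarrow I$. One small wording issue: the relevant permutation condition is that no \emph{elimination} has as its major premise the conclusion of a $\vee E$, not merely that no $\vee E$ does; your derivation satisfies the stronger condition anyway since every $\vee E$-conclusion is $\bot$ and $\bot E$ is never used.
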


\begin{proof}
Straightforward (see \cite{GH3} for details).
\end{proof}

This completes proofs of Lemma 12 and Corollary 13.

\subsubsection{Case $\mathbf{NP}$ vs $\mathbf{PSPACE}$}

In the sequel we consider standard language $\mathcal{L}_{\rightarrow }$ of
minimal logic whose formulas ($\alpha $, $\beta $, $\gamma $, $\rho $ etc.)
are built up from propositional variables ($p$, $q$, $r$, etc.) using one
propositional connective `$\rightarrow $'. The sequents are in the form $%
\Gamma \Rightarrow \alpha $\ whose antecedents, $\Gamma $,\ are viewed as
multisets of formulas; sequents $\Rightarrow \alpha $\ , i.e. $\emptyset
\Rightarrow \alpha $, are identified with formulas $\alpha $.

Recall that HSC for minimal logic, \textsc{LM}$_{\rightarrow }$, includes
the following axioms $\left( \text{\textsc{M}}A\right) $ and inference rules 
$\left( \text{\textsc{M}}I1\rightarrow \right) $, $\left( \text{\textsc{M}}%
I2\rightarrow \right) $, $\left( \text{\textsc{M}}E\rightarrow P\right) $, $%
\left( \text{\textsc{M}}E\rightarrow \rightarrow \right) $ in the language $%
\mathcal{L}_{\rightarrow }$ (the constraints are shown in square brackets). 
\footnote{{\footnotesize This is a slightly modified, equivalent version of
the corresponding \ purely implicational and }$\bot ${\footnotesize -free} 
{\footnotesize subsystem of Hudelmaier's intuitionistic calculus \textsc{LG}%
, cf. \cite{Hud}. The constraints }$q\in VAR\left( \Gamma ,\gamma \right) $%
{\footnotesize \ are added just for the sake of transparency.}}\medskip 

$\fbox{$\left( \text{\textsc{M}}A\right) :\ \ \ \Gamma ,p\Rightarrow p$}$

$\fbox{$\left( \text{\textsc{M}}I1\!\rightarrow \right) :\ \ \ \dfrac{\Gamma
,\alpha \Rightarrow \beta }{\Gamma \Rightarrow \alpha \rightarrow \beta }%
\smallskip \quad \left[ \left( \nexists \gamma \right) :\left( \alpha
\rightarrow \beta \right) \rightarrow \gamma \in \Gamma \right] $}$

$\fbox{$\left( \text{\textsc{M}}I2\!\rightarrow \right) :\ \ \ \dfrac{\Gamma
,\alpha ,\beta \rightarrow \gamma \Rightarrow \beta }{\Gamma ,\left( \alpha
\rightarrow \beta \right) \rightarrow \gamma \Rightarrow \alpha \rightarrow
\beta }$}$

$\fbox{$\left( \text{\textsc{M}}E\!\rightarrow \!P\right) :\ \ \ \dfrac{%
\Gamma ,p,\gamma \Rightarrow q}{\Gamma ,p,p\rightarrow \gamma \Rightarrow q}%
\quad \left[ q\in \mathrm{VAR}\left( \Gamma ,\gamma \right) ,p\neq q\right] $%
}$

$\fbox{$\left( \text{\textsc{M}}E\!\rightarrow \rightarrow \right) :\ \ \ 
\dfrac{\Gamma ,\alpha ,\beta \rightarrow \gamma \Rightarrow \beta \quad
\quad \Gamma ,\gamma \Rightarrow q}{\Gamma ,\left( \alpha \rightarrow \beta
\right) \rightarrow \gamma \Rightarrow q}\quad \left[ q\in \mathrm{VAR}%
\left( \Gamma ,\gamma \right) \right] $}$

\begin{claim}
\textsc{LM}$_{\rightarrow }$ is sound and complete with respect to minimal
propositional logic and tree-like deducibility. Any given formula $\rho $ is
valid in the minimal logic iff sequent $\Rightarrow \rho $ is provable in 
\textsc{LM}$_{\rightarrow }$ by a quasi-polynomial tree-like deduction.
\end{claim}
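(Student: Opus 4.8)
The plan is to prove the two assertions separately: soundness of \textsc{LM}$_{\rightarrow}$ is routine, so the real work goes into completeness together with the height estimate, which I would obtain by importing Hudelmaier's analysis of \textsc{LG} \cite{Hud} to the present purely implicational, $\bot$-free fragment.

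\textbf{Soundness.} First I would verify that each rule of \textsc{LM}$_{\rightarrow}$ is derivable in a standard cut-free sequent calculus (equivalently, a natural deduction system) for minimal implicational logic. The axioms and the introduction rule $\Gamma,\alpha\Rightarrow\beta \,/\, \Gamma\Rightarrow\alpha\rightarrow\beta$ are immediate; for the rules whose principal formula is $(\alpha\rightarrow\beta)\rightarrow\gamma$ one uses the elementary fact that $\beta\vdash\alpha\rightarrow\beta$, so from $(\alpha\rightarrow\beta)\rightarrow\gamma$ and a derivation of $\beta$ one obtains $\gamma$; feeding the premise $\Gamma,\alpha,\beta\rightarrow\gamma\Rightarrow\beta$ through this observation and then discharging $\alpha$ reproduces the conclusion. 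Composing these local simulations bottom-up turns any tree-like \textsc{LM}$_{\rightarrow}$ derivation of $\Rightarrow\rho$ into a minimal-logic proof, so $\rho$ is valid; this yields the ``only if'' halves of both equivalences.

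\textbf{Completeness via terminating proof search.} For the converse I would run backward proof search in \textsc{LM}$_{\rightarrow}$ starting from $\Rightarrow\rho$. The whole point of Hudelmaier's rules is that this search terminates although the calculus carries no explicit contraction: with his weight assignment to sequents, every premise of every rule is strictly lighter than the conclusion, so each branch of an exhaustive search is finite and the search tree is finite. If no branch closes, a maximal unprovable branch yields, in the usual way, a finite Kripke model for minimal logic falsifying $\rho$; contraposing, if $\rho$ is valid then exhaustive search produces a tree-like \textsc{LM}$_{\rightarrow}$ derivation of $\Rightarrow\rho$. I expect the main obstacle to be exactly the correctness of this argument for the nonstandard left rules: they analyse a premise $(\alpha\rightarrow\beta)\rightarrow\gamma$ by replacing it with $\alpha$ and $\beta\rightarrow\gamma$ (or with $\gamma$) rather than retaining a copy of it, and one must check that nothing is thereby lost — i.e.\ that the relevant invertibility and the height-preserving admissibility of contraction persist — which is the technical core of \cite{Hud} and the place where the precise shape of the weight function matters.

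\textbf{The quasi-polynomial bound.} Finally I would read off the size estimate. Every rule of \textsc{LM}$_{\rightarrow}$ enjoys the strict subformula property, so in any tree-like derivation of $\Rightarrow\rho$ each formula occurring anywhere is a subformula of $\rho$; hence at most $|\rho|$ distinct formulas occur, each of weight at most $|\rho|$, giving total distinct-formula weight $\le|\rho|^{2}$. For the height, Hudelmaier's weight function gives $\Rightarrow\rho$ a value polynomial — indeed linear — in $|\rho|$ and strictly decreases along every inference, so the search tree, and in particular the proof obtained above, has height polynomial in $|\rho|$; see \cite{GH1} for the explicit linear bound. Adding the two estimates shows the derivation is quasi-polynomial in the sense of Definition 4, which completes the proof.
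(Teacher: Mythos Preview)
Your approach matches the paper's: both defer to Hudelmaier \cite{Hud} for completeness and the height bound, with details in \cite{GH1}. The paper's own proof is a one-line citation, so your outline is already more explicit than what appears there.

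That said, one step in your quasi-polynomial argument is wrong as stated. You assert that ``every rule of \textsc{LM}$_{\rightarrow}$ enjoys the strict subformula property,'' but this fails for $\left(\text{\textsc{M}}I2\!\rightarrow\right)$ and $\left(\text{\textsc{M}}E\!\rightarrow\rightarrow\right)$: when the principal formula is $(\alpha\rightarrow\beta)\rightarrow\gamma$, the premise contains $\beta\rightarrow\gamma$, which is \emph{not} in general a subformula of $(\alpha\rightarrow\beta)\rightarrow\gamma$ (its subformulas are $\alpha,\beta,\gamma,\alpha\rightarrow\beta$, and the formula itself). So your bound ``at most $|\rho|$ distinct formulas'' is not justified.

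The fix is standard and does not change the conclusion. One checks by induction on the backward search that every formula occurring in any sequent is either a subformula of $\rho$ or of the form $\delta\rightarrow\gamma$ where both $\delta$ and $\gamma$ are subformulas of $\rho$ (the consequent $\gamma$ is inherited unchanged through successive decompositions, while the antecedent shrinks to deeper subformulas). Hence there are at most $O(|\rho|^{2})$ distinct formulas, each of weight $O(|\rho|)$, giving total distinct-formula weight $O(|\rho|^{3})$ --- still polynomial, so the quasi-polynomial claim survives. You should replace the strict-subformula assertion with this weaker invariant.
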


\begin{proof}
Easily follows from \cite{Hud} (see \cite{GH1} for details).
\end{proof}

\begin{lemma}
For any valid purely implicational formula $\rho $ there exists a
quasi-polynomial tree-like proof $\partial \vdash \rho $ in \textsc{NM}$%
_{\rightarrow }$.
\end{lemma}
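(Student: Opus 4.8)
The plan is to chain together the two results already established about the sequent calculus \textsc{LM}$_{\rightarrow}$ and the compression machinery for natural deduction, interposing a translation from cutfree sequent proofs to natural deductions. First I would invoke Claim 17: since $\rho$ is valid in minimal logic, the sequent $\Rightarrow\rho$ has a quasi-polynomial tree-like derivation $d$ in \textsc{LM}$_{\rightarrow}$, i.e. one whose height and whose total weight of distinct formulas are both polynomial in $\left|\rho\right|$. The key point that makes this useful is that, by Hudelmaier's construction, not only is the height linear in $\left|\rho\right|$, but every formula occurring anywhere in $d$ is a subformula of $\rho$ (a subformula property for \textsc{LM}$_{\rightarrow}$), so the number of \emph{distinct} formulas is at most $\left|\rho\right|$ and each has weight at most $\left|\rho\right|$; hence $\phi(d)=\mathcal{O}\!\left(\left|\rho\right|^{2}\right)$ while $h(d)=\mathcal{O}\!\left(\left|\rho\right|\right)$.

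Next I would describe the translation $d\mapsto\partial$ from the tree-like \textsc{LM}$_{\rightarrow}$ derivation to a tree-like \textsc{NM}$_{\rightarrow}$ deduction of $\rho$. This is the standard simulation of a sequent derivation of $\Gamma\Rightarrow\alpha$ by a natural deduction of $\alpha$ from open assumptions $\Gamma$: an axiom $\left(\text{\textsc{M}}A\right)$ becomes an assumption leaf; $\left(\text{\textsc{M}}I1\!\rightarrow\right)$ and $\left(\text{\textsc{M}}I2\!\rightarrow\right)$ become applications of $\left(\rightarrow I\right)$ (in the $I2$ case, after first forming the minor premise $\beta\rightarrow\gamma$ as an assumption and applying $\left(\rightarrow E\right)$ to feed the hypothesis $\alpha$); and the elimination rules $\left(\text{\textsc{M}}E\!\rightarrow\!P\right)$, $\left(\text{\textsc{M}}E\!\rightarrow\!\rightarrow\right)$ become one or two applications of $\left(\rightarrow E\right)$ combining the translated subderivations. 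Each sequent rule expands to a bounded number of \textsc{NM}$_{\rightarrow}$ inferences, so $h(\partial)=\mathcal{O}\!\left(h(d)\right)$ and every formula in $\partial$ is a subformula of some formula in $d$, hence a subformula of $\rho$; therefore $\phi(\partial)=\mathcal{O}\!\left(\phi(d)\right)$ as well. This is exactly the content alluded to in the text by ``straightforward ND interpretation of such tree-like input in HSC yields corresponding short tree-like proof in ND whose total weight of distinct (sub)formulas is polynomial,'' with the reference to \cite{GH1}. Consequently $\partial$ is quasi-polynomial in the sense of Definition 4 — note we do \emph{not} claim $\partial$ is normal, and we do not need it, since quasi-polynomiality is exactly the hypothesis of Theorem 5.

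Finally I would simply observe that $\partial\vdash\rho$ is a quasi-polynomial tree-like proof in \textsc{NM}$_{\rightarrow}$, which is what the lemma asserts; no further compression is required here (the compression to a \emph{polynomial} dag-like proof is the separate content of Theorem 5 / Corollary 11). The main obstacle I anticipate is verifying carefully that the sequent-to-ND translation both preserves the root formula and keeps the open-assumption bookkeeping consistent — in particular that the translated deduction is genuinely a \emph{proof} (every maximal thread closed, i.e. $A(r)=\emptyset$), which amounts to checking that the antecedent $\Gamma$ of each sequent is matched exactly by the multiset of still-open assumptions of the corresponding subdeduction, and that the side conditions on $\left(\text{\textsc{M}}I1\!\rightarrow\right)$ play no role in the ND simulation. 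The height and distinct-formula bounds are then routine bookkeeping over the finitely many rule shapes. I would relegate these verifications to \cite{GH1} and \cite{Hud}, as the excerpt's proof of Claim 17 already does.
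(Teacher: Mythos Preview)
Your proposal is correct and follows essentially the same route as the paper's own proof: invoke the Hudelmaier claim to obtain a quasi-polynomial tree-like \textsc{LM}$_{\rightarrow}$ derivation of $\Rightarrow\rho$, then apply the standard rule-by-rule sequent-to-ND translation into \textsc{NM}$_{\rightarrow}$, with the detailed verification of the height and distinct-formula bounds deferred to \cite{GH1}. One small correction worth noting: \textsc{LM}$_{\rightarrow}$ does \emph{not} enjoy the strict subformula property you assert, since the rules $\left(\text{\textsc{M}}I2\!\rightarrow\right)$ and $\left(\text{\textsc{M}}E\!\rightarrow\rightarrow\right)$ introduce in their premises the formula $\beta\rightarrow\gamma$, which is not in general a subformula of $(\alpha\rightarrow\beta)\rightarrow\gamma$; however, because $\beta$ and $\gamma$ are each subformulas of $\rho$, only polynomially many such new formulas can arise and each has weight $\mathcal{O}(|\rho|)$, so your bound $\phi(\partial)=\mathcal{O}(|\rho|^{2})$ and the quasi-polynomiality conclusion survive unchanged.
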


\begin{proof}
This $\partial $ is a straightforward interpretation in \textsc{NM}$%
_{\rightarrow }$ of a proof in \textsc{LM}$_{\rightarrow }$ that must exist
by the validity of $\rho $ (see \cite{GH1} for details).
\end{proof}

Recall that the validity problem in minimal logic is $\mathbf{PSPACE}$%
-complete \cite{Statman}, \cite{Svejdar}. Together with Theorem 5 and
Corollary 12 this yields

\begin{corollary}
$\mathbf{NP=PSPACE}$\textbf{\ }holds true.
\end{corollary}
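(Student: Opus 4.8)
The plan is to obtain $\mathbf{NP}=\mathbf{PSPACE}$ from two inclusions: the easy one $\mathbf{NP}\subseteq\mathbf{PSPACE}$, and $\mathbf{PSPACE}\subseteq\mathbf{NP}$, the latter by placing a single $\mathbf{PSPACE}$-complete language inside $\mathbf{NP}$. I would take that language to be (an encoding of) the set of purely implicational formulas valid in minimal logic, whose $\mathbf{PSPACE}$-completeness is the cited fact of \cite{Statman}, \cite{Svejdar}.

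First I would show this validity problem lies in $\mathbf{NP}$. The witnessing machine, on input $\rho$, guesses a dag-like \textsc{NM}$_{\rightarrow}$-deduction $\partial^{\ast}$ with root-formula $\rho$ and accepts iff it deterministically confirms both that $\partial^{\ast}$ is locally correct and that $\partial^{\ast}\vdash\rho$. Soundness of this acceptance is immediate from Remark 2 (a dag-like proof unfolds into a tree-like one, so a successful run really certifies validity) together with the decoding of $\partial^{\ast}\vdash\rho$ as $A(r)=\emptyset$ given by Lemma 3. Completeness is where the earlier machinery is used: if $\rho$ is valid then by Lemma 17 it has a quasi-polynomial tree-like \textsc{NM}$_{\rightarrow}$-proof $\partial$, and Theorem 5 compresses $\partial$ into a polynomial dag-like proof $\partial^{\ast}\vdash\rho$, so a correct guess exists. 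That this guess is of polynomial length is exactly the polynomiality asserted in Theorem 5, and that its verification runs in deterministic polynomial time is the second half of Lemma 3 (the circuit-value-problem analogy). In short, this is precisely the Corollary already drawn from Theorem 5 and Lemma 3: $\rho$ is valid in minimal logic iff it admits a polynomial dag-like \textsc{NM}$_{\rightarrow}$-proof, confirmable by a deterministic TM in $|\rho|$-polynomial time.

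Then I would finish via closure of $\mathbf{NP}$ under polynomial-time many-one reductions: every $L\in\mathbf{PSPACE}$ reduces in polynomial time to minimal-logic validity, which now lies in $\mathbf{NP}$, hence $L\in\mathbf{NP}$, so $\mathbf{PSPACE}\subseteq\mathbf{NP}$. With $\mathbf{NP}\subseteq\mathbf{PSPACE}$ this gives $\mathbf{NP}=\mathbf{PSPACE}$; and via $\mathbf{PSPACE}=\mathbf{coPSPACE}$ it re-derives $\mathbf{NP}=\mathbf{coNP}$ as well.

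The heavy lifting --- Lemma 17 (every minimal tautology has a quasi-polynomial tree-like proof, routed through Hudelmaier's cutfree calculus \textsc{LM}$_{\rightarrow}$ as in Claim 16) and Theorem 5 (horizontal compression with cleansing, Lemmas 7 and 8) --- is already behind us, so the main obstacle remaining for this corollary is careful bookkeeping: that $\partial^{\ast}$ is honestly polynomial-size in $|\rho|$, so that the nondeterministic guess is legitimate; that checking local correctness of $\partial^{\ast}$ and then $A(r)=\emptyset$ is honestly deterministic polynomial time; and that the $\mathbf{PSPACE}$-completeness invoked really concerns the purely implicational minimal fragment, which is the content of \cite{Statman}, \cite{Svejdar}, rather than a larger logic. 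Should a self-contained $\mathbf{PSPACE}$-hardness be wanted in place of the citation, I would route through the translation $\gamma\mapsto\gamma^{\star}$ of \cite{Statman}, \cite{Haeus} with $size(\gamma^{\star})\leq size^{3}(\gamma)$ recalled above, carrying full intuitionistic (hence classical, hence $\mathbf{PSPACE}$-complete) validity into purely implicational minimal validity in polynomial size.
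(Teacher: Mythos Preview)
Your proposal is correct and follows essentially the same route as the paper: combine the $\mathbf{PSPACE}$-completeness of minimal-logic validity (Statman, \v{S}vejdar) with Lemma 17 and Theorem 5 (equivalently, Corollary 10) to place that validity problem in $\mathbf{NP}$, then close under polynomial-time reductions. Your write-up is more explicit about the bookkeeping (soundness via Remark 2, verification via Lemma 3, the $\gamma\mapsto\gamma^{\star}$ fallback for hardness), but the logical skeleton matches the paper's one-line derivation.
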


\begin{corollary}
The satisfiability and validity problems in quantified boolean logic (QBL)
are both $\mathbf{NP}$-complete, since corresponding $\mathbf{PSPACE}$%
-completeness is well-known (see e.g. \cite{arora}, \cite{Papa}). Moreover $%
\mathbf{BQP\subseteq NP}$ holds, where $\mathbf{BQP}$ is the class of
problems computable in quantum polynomial time. This follows from the known
inclusion $\mathbf{BQP\subseteq PSPACE}$ (cf. \cite{arora}).
\end{corollary}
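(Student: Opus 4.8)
The plan is to obtain the corollary with essentially no new proof‑theoretic work, deriving both assertions from the equality $\mathbf{NP}=\mathbf{PSPACE}$ just established, together with two standard facts of structural complexity that are cited in the statement: the $\mathbf{PSPACE}$‑completeness of quantified boolean logic and the inclusion $\mathbf{BQP}\subseteq\mathbf{PSPACE}$.

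First I would fix the two QBL decision problems precisely. For a closed quantified boolean formula, validity and satisfiability coincide, and the set of true closed quantified boolean formulas, TQBF, is a classical $\mathbf{PSPACE}$‑complete problem (Stockmeyer--Meyer). For formulas with free variables, $SAT_{QBL}$, resp.\ $VAL_{QBL}$, is the problem of deciding whether the formula becomes true after existentially, resp.\ universally, quantifying its free variables; each reduces in polynomial time to and from TQBF, so both are $\mathbf{PSPACE}$‑complete as well. Now $\mathbf{NP}$‑hardness of $SAT_{QBL}$ and $VAL_{QBL}$ is automatic, since they are $\mathbf{PSPACE}$‑hard and $\mathbf{NP}\subseteq\mathbf{PSPACE}$; membership in $\mathbf{NP}$ follows from $\mathbf{PSPACE}\subseteq\mathbf{NP}$, i.e.\ from the preceding corollary $\mathbf{NP}=\mathbf{PSPACE}$. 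Combining the two, $SAT_{QBL}$ and $VAL_{QBL}$ are $\mathbf{NP}$‑complete.

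For the second assertion I would invoke the Bernstein--Vazirani‑type simulation witnessing $\mathbf{BQP}\subseteq\mathbf{PSPACE}$: the acceptance amplitude of a polynomial‑size quantum circuit is a sum of exponentially many computational‑path amplitudes, each of which can be computed to sufficient precision in polynomial space, and such a sum can be accumulated path by path while reusing space. Substituting $\mathbf{PSPACE}=\mathbf{NP}$ from the preceding corollary then yields $\mathbf{BQP}\subseteq\mathbf{NP}$.

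There is no genuine obstacle in this final step: all of the mathematical weight sits in the preceding corollary, and what remains is bookkeeping with off‑the‑shelf results. The only points needing a little care are the exact formulations of the QBL satisfiability and validity problems (closed versus open formulas, and what ``satisfiability'' is taken to mean for the free variables), so that the interreductions with TQBF are transparently polynomial‑time, and the elementary observation that $\mathbf{PSPACE}$‑hardness already implies $\mathbf{NP}$‑hardness; both are routine.
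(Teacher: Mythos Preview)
Your proposal is correct and matches the paper's approach exactly: the paper gives no separate proof for this corollary, embedding the entire justification in the statement itself by citing the $\mathbf{PSPACE}$-completeness of QBL and the inclusion $\mathbf{BQP}\subseteq\mathbf{PSPACE}$, then invoking $\mathbf{NP}=\mathbf{PSPACE}$ from Corollary~18. Your extra care about closed versus open QBL formulas and the sketch of the $\mathbf{BQP}\subseteq\mathbf{PSPACE}$ simulation are reasonable elaborations but go beyond what the paper supplies.
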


\begin{conclusion}[$\mathbf{PSPACE}$ paradise]
Denote by $\mathbb{U}$ the universe of solvable computational problems and
let $\mathbb{V:}=\mathbf{PSPACE\varsubsetneq \,}\mathbb{U}$ be the proper
subuniverse consisting of problems solvable in polynomial space.

Thus $\mathbb{V}$ contains all problems whose solutions are polynomially
admissible with respect to the space used (regardless of the required time).
Loosely speaking, $\mathbb{V}$ is the world of problems solvable in the
material world of sufficiently big computers, without any time restriction.
It is known that $\mathbb{V}$ preserves basic propositional operations and
non-deterministic provability and includes $\mathbf{BQP}$ (cf. e.g. \cite
{arora}, \cite{Papa}). 

Let $\mathbb{W:}=\mathbf{NP\subseteq \,}\mathbb{V\,}\mathbf{\varsubsetneq \,}%
\mathbb{U}$ be another subuniverse consisting of problems that are
potentially solvable in polynomial time. Now Corollary 18 shows that $%
\mathbb{W}=\mathbb{V}$, i.e. any given problem $X\in \mathbb{V}$ (in
particular $X\in \mathbf{BQP}$) is in fact solvable by some deterministic
polynomial-time TM $M_{X}$. Hence all problems in $\mathbb{V}$ are
polynomially admissible with respect to both space and time used. To
paraphrase Hilbert's famous quotation: \emph{in }$\mathbb{V}$\emph{\ there
is no polynomial-time ignorabimus}. One can ask whether $M_{X}$ can be
obtained from $X$ by a polynomial-time algorithm. The answer in NO, provided
that $\mathbf{P}\neq \mathbf{NP}$.
\end{conclusion}

--------------------------------------------------------------------------------------------

--------------------------------------------------------------------------------------------

\end{document}